\newcommand{\C}{{\mathbb C}}
\newcommand{\N}{{\mathbb N}}
\newcommand{\R}{{\mathbb R}}
\newcommand{\Z}{{\mathbb Z}}
\newcommand{\cA}{{\mathcal A}}
\newcommand{\cG}{{\mathcal G}}
\newcommand{\cH}{{\mathcal H}}
\newcommand{\cN}{{\mathcal N}}
\newcommand{\cT}{{\mathcal T}}
\newcommand{\cV}{{\mathcal V}}
\newcommand{\cD}{{\mathcal D}}
\newcommand{\cC}{{\mathcal C}}
\newcommand{\cS}{{\mathcal S}}
\newcommand{\SU}{\mathrm{SU}}
\newcommand{\SL}{\mathrm{SL}}
\newcommand{\SO}{\mathrm{SO}}
\newcommand{\U}{\mathrm{U}}
\newcommand{\be}{\begin{equation}}
\newcommand{\ee}{\end{equation}}
\newcommand{\beq}{\begin{eqnarray}}
\newcommand{\eeq}{\end{eqnarray}}
\newcommand{\bes}{\begin{eqnarray}}
\newcommand{\ees}{\end{eqnarray}}
\newcommand{\mat} [2] {\left ( \begin{array}{#1}#2\end{array} \right ) }
\renewcommand{\u}{{\mathfrak{u}}}
\newcommand{\su}{{\mathfrak{su}}}
\newcommand{\so}{{\mathfrak{so}}}
\newcommand{\la}{\langle}
\newcommand{\ra}{\rangle}
\newcommand{\tr}{{\mathrm{Tr}}}
\newcommand{\f}{\frac}
\def\nn{\nonumber}
\def\pp{\partial}
\def\vphi{\varphi}
\def\eps{\epsilon}
\def\om{\omega}
\newcommand{\id}{\mathbb{I}}
\def\act{\triangleright}
\def\vsigma{\vec{\sigma}}
\def\act{\,\triangleright\,}
\def\bz{\bar{z}}
\def\dd{\mathrm{d}}
\def\rd{\mathrm{d}}
\def\vX{\vec{X}}
\def\balpha{\bar{\alpha}}
\def\bbeta{\bar{\beta}}
\newtheorem{theorem}{Theorem}[section]
\newtheorem{prop}[theorem]{Proposition}
\begin{document}

\title{Loop Quantum Gravity Boundary Dynamics and $\SL(2,\C)$ Gauge Theory }

\author{{\bf Etera R. Livine}}\email{etera.livine@ens-lyon.fr}
\affiliation{Univ. Lyon, Ens de Lyon, Univ. Claude Bernard, CNRS, LPENSL, 69007 Lyon, France}

\date{\today}

\begin{abstract}


In the context of the quest for a holographic formulation of quantum gravity, we investigate the basic boundary theory structure  for loop quantum gravity. In 3+1 space-time dimensions, the boundary theory lives on the 2+1-dimensional time-like boundary and is supposed to describe the time evolution of the edge modes living on the 2-dimensional boundary of space, i.e. the space-time corner. Focusing on ``electric'' excitations -quanta of area- living on the corner, we formulate their dynamics in terms of classical spinor variables and we show that the coupling constants of a polynomial Hamiltonian can be understood as the components of a background boundary 2+1-metric. This leads to a deeper conjecture of a correspondence between boundary Hamiltonian and boundary metric states.
We further show that one can reformulate the quanta of area data in terms of a $\SL(2,\C)$ connection, transporting the spinors on the boundary surface and whose $\SU(2)$ component would define ``magnetic'' excitations (tangential Ashtekar-Barbero connection), thereby opening the door to writing the loop quantum gravity boundary dynamics as a 2+1-dimensional $\SL(2,\C)$ gauge theory.

\end{abstract}

\maketitle
\tableofcontents


%
%
%


\vspace*{10mm}
With the raise of the holographic principle as one -if not the- main guide towards a theory of quantum gravity, a very active line of research is presently the study of  boundary theories induced by general relativity, either asymptotically as in the AdS/CFT correspondence and the analysis of soft modes, or in a quasi-local fashion  through the investigation of edge modes living on space-time corners and boundary current algebras for finite regions of space-time. It seems that the bulk-boundary relation is an essential cornerstone of the theory; it is not the mere propagation from boundary conditions to bulk field configurations, but the realization that representing bulk observables as boundary charges encodes the symmetries and renormalization flow of quantum gravity.

The goal of this short paper is to tackle this issue in the loop quantum gravity framework. We will not study the corner structure or edge modes of classical general relativity and their possible quantization as e.g. in \cite{Freidel:2014qya,Freidel:2015gpa,Donnelly:2016auv,Freidel:2016bxd,DePaoli:2018erh,Freidel:2019ees,Freidel:2019ofr,Harlow:2019yfa,Takayanagi:2019tvn,Freidel:2020xyx,Freidel:2020svx,Freidel:2020ayo}, but  we focus instead on the question: what kind of boundary theory can the standard loop quantum gravity formalism support? This is a crucial point to address in order to understand what type of boundary conditions can one impose in loop quantum gravity, whether or not the standard framework of loop quantum gravity should be extended and quantum states of geometry enriched with more information, as proposed for instance in \cite{Bahr:2015bra,Charles:2016xwc,Delcamp:2018efi,Freidel:2018pvm,Livine:2019cvi,Freidel:2019ees}, and what type of boundary dynamics could one hope to translate from the classical setting to the quantum realm of loop gravity's spin network states.

In the canonical framework of loop quantum gravity based on a 3+1 splitting of space-time in terms of a 3d spatial hypersurface evolving in time, we focus on the space-time corner defined by the hypersurface's boundary. Loop quantum gravity defines quantum states of the bulk geometry as spin networks, which are polymer structures consisting of (embedded) graphs dressed with algebraic data from the representation theories of the Lie groups $\SU(2)$ (and possibly of the  $\SL(2,\C)$ Lorentz group and their  quantum group deformations). Assuming the spatial boundary to have the topology of a 2-sphere, we consider a spin network state puncturing the corner, as drawn on fig.\ref{fig:puncturedcorner}, similar to the lightning within a plasma globe, thus leading to the boundary data of the algebraic data carried by the links cut by the boundary. The corner thus carries a certain number of $\SU(2)$ spin states $|j_{i},m_{i}\ra$ at the quantum level, or of spinors $z_{i}\in\C^{2}$ at the classical level.
From this starting point, section \ref{sec:spinor} reviews the spinor phase space on space-time corners, according to the holomorphic reformulation of loop quantum gravity \cite{Freidel:2010aq,Freidel:2010bw,Freidel:2009nu,Borja:2010rc,Livine:2011gp,Speziale:2012nu,Livine:2013zha,Alesci:2016dqx,Calcinari:2020bft}, describes what type of spinor theory one could expect on the boundary for a given number of punctures and discusses the possible continuum limit into a spinor field theory\footnotemark{} living on the 2+1-d time-like boundary of space-time, which could be considered as a second quantization allowing to vary the number of punctures.
\footnotetext{
Such a description of the boundary data in terms of a continuous spinor field is of course reminiscent of the formulation of general relativity's boundary data on null surfaces in terms of spinors, as in e.g. \cite{Wieland:2016dbc,Wieland:2017zkf}. A possible relation between the two frameworks should clearly be investigated but is beyond the scope of the present work.
}

Section  \ref{sec:sl2c} is dedicated to re-writing the dynamics of boundary spinors as a $\SL(2,\C)$ gauge theory, having in mind the idea of a gravity-gauge duality linking the bulk and boundary dynamics for finite space-time regions. More precisely, we show how one can reformulate the kinematics and dynamics of a collection of spinors in terms of a discrete  $\SL(2,\C)$ connection. This leads us to the proposal of templates for loop quantum gravity boundary theories in the continuum  in terms of $\SL(2,\C)$ gauge connections living on the 2+1-d time-like boundary of space-time.
\begin{figure}[!htb]
%
%
%
%
%
%
%
\includegraphics[height=40mm]{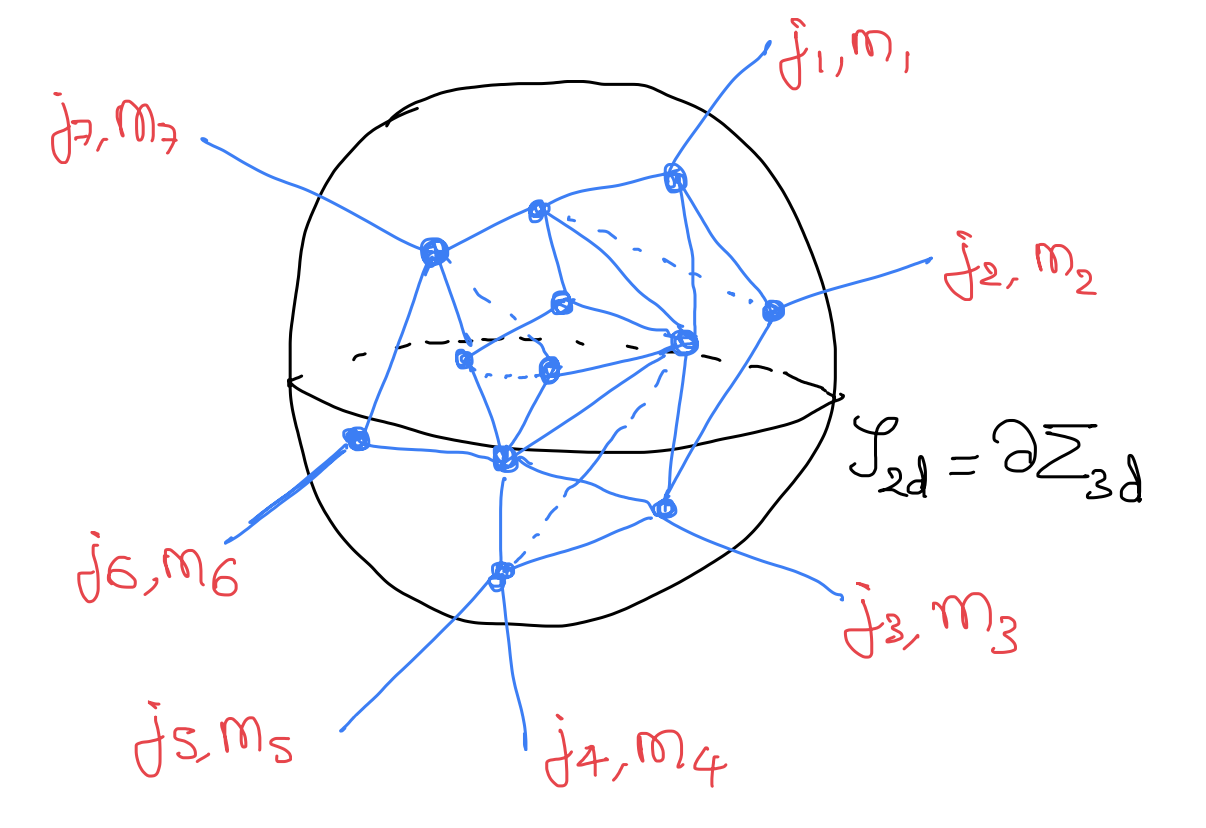}
	\caption{ Loop quantum gravity boundary data defined by the spin network puncturing the two-dimensional space-time corner $\cS=\pp \Sigma$: the spin network edges puncturing the boundary of the 3d space $\Sigma$ carry flux excitations, encoded as spin states $|j_{i},m_{i}\ra$ and defining quanta of area on the boundary surface.}
	\label{fig:puncturedcorner}
	\label{fig:boundarygraph}
\end{figure}
%

\section{Loop Quantum Gravity on Space-Time Corners}
\label{sec:spinor}


We consider here the simplest setting in loop quantum gravity: a region of the 3d space with the topology of a 3-ball bounded by a 2-sphere, whose quantum state is a spin network state whose links puncture the 2-sphere transversally (without  looking into the subtle possibility of links tangential to the boundary). The boundary data will thus consist in the fluxes carried by the open links going through the boundary. This means that we are focusing on flux excitations on the boundary, i.e. ``electric'' excitations, as illustrated in fig.\ref{fig:boundarygraph}, and discarding (as for now) the possibility of magnetic excitations and momentum excitations as presented in \cite{Freidel:2019ees,Freidel:2019ofr,Freidel:2020xyx,Freidel:2020svx}.

%
%
We do not assume any further a priori  boundary data, such as a graph-like  linking the punctures. For instance, such a boundary graph was introduced in \cite{Feller:2017ejs} as a notion of nearest neighbour (and thus of locality) on the boundary inherited from the bulk graph of the spin network (i.e. a notion of boundary locality as the projection of the near-boundary bulk locality). Taking into the magnetic excitations on the boundary also led to another notion of boundary graph, as a spin network living on the 2d boundary itself and carrying non-trivial holonomies of the connection along curves tangential to the boundary. This is a recurring idea in recent investigation of a necessary extension of the loop quantum gravity formalism,  as when representing twisted geometries as a moduli of $\SL(2,\C)$ flat connections \cite{Haggard:2015kew,Han:2016dnt}, double spin networks \cite{Charles:2016xzi}, Drinfeld tube networks \cite{Delcamp:2016yix,Delcamp:2018efi}, Poincar\'e charge networks \cite{Freidel:2019ees}. This should definitely be kept in mind in future investigation.
We will nevertheless see below in section \ref{subsec:dynamics} that a notion of boundary graph naturally arises when defining the dynamics for the boundary degrees of freedom.

Within those limitations of the present approach, the boundary data on a spatial boundary (or space-time corner) induced by the bulk geometry state in loop quantum gravity consists in the spin states carried by the punctures representing the flux excitations on the boundary. Each puncture state lives in the Hilbert space,
\be
\cH^{(1)}=\bigoplus_{j\in\f\N2}\cV_{j}\,,
\ee
where  $\cV_{j}$ is the $(2j+1)$-dimensional  Hilbert space carrying the spin-$j$ representation of the $\SU(2)$ Lie group. The Hilbert space for boundary states with $N$ punctures is simply the tensor product of $N$ copies of the one-puncture Hilbert space:
\be
\cH^{(N)}=\bigotimes_{i=1}^{N}\cH^{(1)}_{i}
\,.
\ee
And our goal is to describe the dynamics of the boundary flux excitations represented as the spin states carried by the punctures.
This is actually the same setting as if one would like endow an isolated horizon with microscopic dynamics in loop quantum gravity in the standard description of quantum horizons \cite{Ashtekar:1997yu,Ashtekar:2000eq,Domagala:2004jt,Agullo:2009zt,Asin:2014gta}. Indeed, a horizon state is entirely described as a tensor product of spin states . The boundary theory is usually to be trivial, i.e. the spin states are usually assumed not to interact, although the interested reader can see \cite{Feller:2017ejs} for a proposal of bulk-induced interaction on the horizon through a Bose-Hubbard exchange hamiltonian. Here we do not place ourselves in the restrictive setting of an (isolated) horizon but consider a general boundary surface without assuming any specific boundary condition. The present work would nevertheless be relevant when investigating the quantum gravity dynamics around black hole horizons.

Interestingly, the spin states can be seen as wave-functions over a complex 2-vector, or spinor $z=(z^{0},z^{1})\in\C^{2}$. This is a standard construction in mathematics for semi-simple Lie groups. 
The vector space $\cV_{j}$ consists in homogeneous polynomials in $z^{0}$ and $z^{1}$ of degree $2j$, with the standard basis state $|j,m\ra$ corresponding to the monomial $(z^{0})^{j+m}(z^{1})^{j-m}$. The $\SU(2)$ action on those polynomials results from the natural action of $\SU(2)$ group elements as 2$\times$2 matrices on the spinor $z$:
\be
\forall g\in\SU(2)\,, z\in\C^{2}\,,\qquad
g\triangleright z = \mat{cc}{a & b \\ c & d}\mat{c}{z^{0}\\z^{1}}\,\in\C^{2}
\,.
\ee
The scalar product is then given by the integration of the polynomials with respect to the Gaussian measure on $\C^{2}$ (up to a $j$-dependent factor)
or equivalently to the integration over the $3$-sphere of unit spinors satisfying $\la z|z\ra=|z^{0}|^{2}+|z^{1}|^{2}=1$. This simple correspondence is the foundation for the holomorphic reformulation of loop quantum gravity in terms of spinorial coherent states as proposed in \cite{Borja:2010rc,Livine:2011gp,Speziale:2012nu,Livine:2013zha,Alesci:2016dqx}.
As a consequence, we work with  boundary data on the space-time corner given as a collection of spinors $z_{i}\in\C^{2}$, one spinor variable per puncture. In this section, we explain how to endow those spinors with a dynamics. In the next section, we will show how to reformulate collection of spinors in terms of flat $\SL(2,\C)$ connections on the boundary.


\subsection{Punctures and Spinors on the Boundary}

%
%
%
%
%
%

The phase space for $N$ spin network punctures on the boundary consists in $N$ spinors $z_{i}\in\C^{2}$ with $i=1..N$ provided with a canonical Poisson bracket:
\be
\label{zzbracket}
\{z^{A}_{i},\bz^{B}_{j}\}=-i\delta^{AB}\delta_{ij}
\,,\qquad
\{z^{A}_{i},z^{B}_{j}\}=\{\bz^{A}_{i},\bz^{B}_{j}\}=0
\,.
\ee
Each spinor $z_{i}\in\C^{2}$ defines a 3-vector $\vX_{i}\in\R^{3}$, identified as the flux associated to the puncture. Using bra-ket notation, the flux vectors are defined as:
\be
X_{i}^{a}=\f12\la z_{i}|\sigma^{a} |z_{i} \ra\,,
\ee
where the $\sigma^{a}$, with $a=1..3$, are the three Pauli matrices, normalized such that $(\sigma^{a})^{2}=\id$. Each flux vector forms a $\su(2)$ Lie algebra:
\be
\{X_{i}^{a},X_{j}^{b}\}=\delta^{ab}\delta_{ij}\,.
\ee
This is simply the Schwinger presentation of the $\su(2)$ algebra.
Indeed, upon quantization, we promote each spinor component to a quantum harmonic oscillator:
\be
\left|
\begin{array}{lcl}
z^{A}_{i}&\rightarrow & a^{A}_{i}\\
\bz^{A}_{i}&\rightarrow & a^{\dagger A}_{i}
\end{array}
\right.
\qquad\textrm{with}\quad
[a^{A}_{i},a^{\dagger B}_{j}]=1\,.
\ee
Then the flux vectors become the $\su(2)$ Lie algebra generators for the $\SU(2)$ representation attached to each puncture considered as an open end of the spin network links:
\be
X^{a}_{i}\rightarrow  J^{a}_{i}\,,\quad
J^{a}_{i}=J^{\dagger a}_{i}
\qquad\textrm{with}\quad
[J^{a}_{i},J^{b}_{j}]=i\delta_{ij}\eps^{abc}J^{c}_{i}\,.
\ee
These $\SU(2)$ representations are all independent and we recover the Hilbert space $\cH^{(N)}$ for $N$ spin network punctures on the boundary.
The $\SU(2)$ Casimir of each puncture, $\cC_{i}=J^{a}_{i}J^{a}_{i}$, corresponds to the quantization of the squared flux vector norm $|\vX_{i}|^{2}$ and gives the spin $j_{i}$ carried by the puncture by the usual Casimir expression, $\cC_{i}=j_{i}(j_{i}+1)$.
Actually, in the Schwinger presentation, there is an operator corresponding to the quantization of the flux vector norm $|\vX_{i}|$ and giving directly the spin $j_{i}$:
\be
|\vX_{i}|=\f12\la z_{i}|z_{i} \ra
\rightarrow
j_{i}=\f12\sum_{A=0,1}a^{\dagger A}_{i}a^{ A}_{i}
\,.
\ee
This provides the spin $j_{i}$ with the interpretation as a quantum of area $\cA_{i}=j_{i}\ell_{Planck}^{2}$ on the boundary surface in Planck units (suitably renormalized by the Immirzi parameter).
For details on the Schwinger presentation and the derivation of the standard spin basis $|j,m\ra$, the interested reader can refer to \cite{Freidel:2009ck}. Furthermore, for details on how the classical spinor variables naturally label a system of coherent states minimizing the uncertainty relations associated to the $\su(2)$ commutators and transforming consistently under the $\SU(2)$ action, the interested reader can refer to \cite{Freidel:2010tt,Livine:2013tsa}. These $\SU(2)$ coherent states are a basic tool in the definition and construction of EPRL-like spin foam amplitudes for the dynamics of spin networks \cite{Livine:2007vk,Livine:2007ya,Freidel:2007py,Dupuis:2011fz,Dupuis:2011wy,Freidel:2012ji,Speziale:2012nu,Banburski:2014cwa}.

A final subtlety is that a spinor $z\in\C^{2}$, with four real components, contains more information than the flux vector $\vX\in\R^{3}$, with three real components. The extra data is the phase of the spinor. Indeed the flux vector is invariant under phase shift of the spinor:
\be
\forall \phi\in\R\,,\qquad \vX(e^{i\phi}z)=\vX(z)
\,.
\ee
This phase degree of freedom is called the {\it twist angle} and identified as a measure of the extrinsic curvature (of the canonical hypersurface) at the puncture  \cite{Freidel:2010aq,Freidel:2010bw}. It is the variable canonically conjugate to the flux vector norm $\vX$, i.e. to the puncture area, and thus to the spin at the quantum level. Without this phase freedom, one can not define a phase space with varying puncture area, i.e. quantized into a Hilbert space allowing arbitrary spin superpositions (see e.g. \cite{Livine:2013tsa} for more mathematical details).

\medskip

In the rest of the paper, we will not work at the quantum level with spin states but will work in the simpler classical setting of the boundary spinorial phase space. This allows to describe the boundary theory in the space-time corner through action principles defined in terms of the spinor variables. These action principles can then be quantized at finite number of punctures, for  fixed $N$, or quantized and renormalized as quantum field  theories in the continuum limit of infinite refinement  $N\rightarrow\infty$.

\subsection{Spinor dynamics on the 2+1-d time-like boundary}
\label{subsec:dynamics}

The present goal is to define dynamics for the  degrees of freedom on the spatial boundary of spin network states, that is for the spinor variables attached to the punctures on the boundary.
These boundary degrees of freedom are initially defined on the 2d boundary of the canonical 3d hypersurface and then evolve along the 2+1-d time-like boundary of the 3+1-d space-time.

Let us start with the case of a fixed number of punctures $N$. A natural action principle for the boundary dynamics is, first to take into account the canonical Poisson bracket \eqref{zzbracket} between spinor components, second define a relativistic dynamics through a Hamiltonian constraint accounting for the invariance of the theory under time reparametrizations. This yields the general ansatz for the boundary action:
\be
\label{actionN}
S_{\pp}[\{z_{k}(t)\}_{k}]
\,=\,
\int \dd t\,\left[
-i\sum_{k}\la z_{k}|\dd_{t} z_{k}\ra
-\cN\cH[\{z_{k}\}]
\right]\,,
\ee
where $\cN$ is a lapse variable and $\cH[\{z_{k}\}]$ is the to-be-specified Hamiltonian constraint.
At that point, there are two standard strategies:
\begin{itemize}
\item {\bf A.} one can analyze the Hamiltonian framework of general relativity with boundary terms, understand the boundary symmetry algebra and derive the pool of possible boundary theories depending on the chosen boundary conditions, discretize and/or quantize the resulting boundary dynamics in order to translate it in terms of discrete geometry, spin network states and boundary surface excitations;

\item {\bf B.} one can alternatively work within the already defined mathematical framework of loop quantum gravity, derive the natural boundary variables, understand the pool of possible dynamics that can be defined for this boundary data and investigate the resulting physics.

\end{itemize}

Ultimately, one of course wish for a convergence of those approaches. Here we will focus on the latter strategy (B) and postpone the comparison with the first strategy (A) to future (but necessary) investigation.
From this viewpoint, the natural path to follow is to investigate what kind of Hamiltonian $\cH[\{z_{k}\}]$ we should, or could, define for the boundary dynamics, and then, later, analyze the coarse-graining, renormalization flow, possible fixed points and universality classes dynamics. Going step by step, let us understand what makes a good ansatz for the Hamiltonian constraint.
Once we have chosen basic variables, here the spinors, it is natural to proceed to a Taylor expansion of the functionals and thus consider polynomial ansatz (of increasing power) for the Hamiltonian$\cH[\{z_{k}\}]$.
If the quadratic ansatz, corrected by potential higher order terms, does not yield expected or realistic physics, then this usually indicates that we have made the wrong choice of basic variables and that we should very likely consider a different phase for spin networks (for example, consider a type of condensate of spin networks, as proposed in the group field theory approach \cite{Oriti:2016acw,Oriti:2018qty,Carrozza:2020akv}) and the corresponding different boundary data for loop quantum gravity on space-time corners.

We naturally require the action to be invariant to be invariant under global $\SU(2)$ transformations acting simultaneously on all the spinor:
\be
|z_{k}\ra\,\mapsto g\,|z_{k}\ra\,,\qquad g\in\SU(2)\,.
\ee
This descends from the $\SU(2)$ local gauge invariance of canonical general relativity (formulated  in terms of vierbein/connection variables) translated to a single overall $\SU(2)$ reference frame for the whole spatial boundary. This can be seen as a gauge-fixing of the $\SU(2)$ local gauge invariance of the bulk spin network state down to a $\SU(2)$ gauge invariance for the boundary data rooted at a chosen boundary vertex (see \cite{Livine:2013gna} for more details on the bulk-to-boundary projection of spin network states from a coarse-graining perspective).
We will discuss later in the section \ref{sec:gauge} the possibility of imposing a local $\SU(2)$ gauge invariance on the boundary and its relation to magnetic boundary degrees of freedom.
As for now, imposing this global $\SU(2)$ gauge  invariance on the boundary  automatically removes the possibility of linear terms in the boundary Hamiltonian.

\medskip

At the quadratic level, we can have local potential terms in $\la z_{k}| z_{k}\ra$ for each puncture. As there is not an obvious reason to prefer one puncture over another, this lead to a single term for the Hamiltonian constraint:
\be
\cH=\beta_{2}\sum_{k}\la z_{k}| z_{k}\ra +\dots\,,
\ee
where $\beta_{2}$ is a to-be-specified coupling constant. This first term of the Taylor expansion is simply the total boundary area $\cA[\{z_{k}\}]=\sum_{k}\la z_{k}| z_{k}\ra$. Then we can introduce diffusion/propagation terms as non-local terms coupling punctures together.
There are two types of such terms, scalar products $\la z_{k}| z_{l}\ra$ and scalar products $[ z_{k}| z_{l}\ra$, where we have used the notation introduced in \cite{Freidel:2010bw,Freidel:2010tt} to denote the dual spinor:
\be
|z\ra =\mat{c}{z^{0}\\z^{1}}
\,,\qquad
|z]=\mat{c}{-\bz^{1}\\ \bz^{0}}
=
\mat{cc}{0 & -1\\+1 & 0}\mat{c}{\bz^{0}\\ \bz^{1}}
\,,
\ee
\be
\la z| w\ra=\bz^{0} w^{0}+\bz^{1} w^{1}
\,,\qquad
[z|w\ra=z^{0}w^{1}-z^{1}w^{0}\,.
\ee
The dual spinor transforms under the same $\SU(2)$ group action as the original spinor:
\be
\forall g\in\SU(2)\,,\quad
g|z\ra
=\mat{cc}{\alpha &-\bbeta \\ \beta & \balpha}\mat{c}{z^{0}\\z^{1}}
=\mat{c}{  \alpha z^{0}-\bbeta z^{1}\\ \balpha z^{1}+\beta z^{0}}
=|gz\ra
\ee
\be
\Rightarrow\quad
g|z]=\mat{cc}{\alpha &-\bbeta \\ \beta & \balpha}\mat{c}{-\bz^{1}\\ \bz^{0}}
=\mat{c}{-\alpha\bz^{1}-\bbeta\bz^{0}\\  \balpha\bz^{0}-\beta\bz^{1}}
=|gz]
\,.
\ee
A direct consequence is that the scalar products  $\la z_{k}| z_{l}\ra$ and $[ z_{k}| z_{l}\ra$ are the only quadratic polynomials which are invariant under the global $\SU(2)$ action on the spinors. The combinations  $\la z_{k}| z_{l}\ra$ commute with the total area $\cA[\{z_{i}\}]=\sum_{i}\la z_{i}| z_{i}\ra$ and form a $\u(N)$ Lie algebra as shown in \cite{Girelli:2005ii,Freidel:2009ck}:
\be
\{\la z_{k}| z_{l}\ra,\cA\}=0
\,,\qquad
\{\la z_{k}| z_{l}\ra,\la z_{m}| z_{n}\ra\}=
i\big{(}\delta_{kn}\la z_{m}| z_{l}\ra-\delta_{lm}\la z_{k}| z_{n}\ra\big{)}
\,.
\ee
Therefore the Hamiltonian flow that they generate can be integrated as $\U(N)$ transformations that  preserves the total surface area of the boundary \cite{Girelli:2005ii,Freidel:2009ck,Livine:2013tsa}.

On the other hand, the combinations $[ z_{k}| z_{l}\ra$ do not commute with the total area. There Hamiltonian flow actually decreases the  boundary area, while their complex conjugates $\la z_{k}| z_{l}]$ generate a Hamiltonian flow increasing the  boundary area:
\be
\{[ z_{k}| z_{l}\ra,\cA\}=+i \la z_{k}| z_{k}\ra+i \la z_{l}| z_{l}\ra
\,,\qquad
\{\la z_{k}| z_{l}],\cA\}=-i \la z_{k}| z_{k}\ra-i \la z_{l}| z_{l}\ra
\,.
\ee
These are thus creation and annihilation of entangled area quanta between the two punctures $k$ and $l$. The scalar product $[ z_{k}| z_{l}\ra$ is holomorphic in the spinors, while  its complex conjugate $\la z_{k}| z_{l}]$ is anti-holomorphic. Their Poisson brackets do not close on their own. Nevertheless, together with the area-preserving scalar products, they form a closed $\so^{*}(2N)$ Lie algebra \cite{Freidel:2010tt,Girelli:2017dbk}:
\be
\begin{array}{l}
\{[ z_{k}| z_{l}\ra,[ z_{m}| z_{n}\ra\}=\{\la z_{k}| z_{l}],\la  z_{m}| z_{n}]\}=0
\,,\vspace*{1mm}\\
\{[ z_{k}| z_{l}\ra,\la z_{m}| z_{n}]\}=-i\big{(}
\delta_{lm}\la z_{n}|z_{k}\ra-\delta_{ln}\la z_{m}|z_{k}\ra-\delta_{km}\la z_{n}|z_{l}\ra+\delta_{kn}\la z_{m}|z_{l}\ra
\big{)}
\,,\vspace*{1mm}\\
\{\la z_{k}| z_{l}\ra,[ z_{m}| z_{n}\ra\}=
i\big{(}
\delta_{kn}[ z_{m}|z_{l}\ra-\delta_{km}[ z_{n}|z_{l}\ra
\big{)}
\,.
\end{array}
\ee
The Hamiltonian flow of all those quadratic couplings between punctures can thus be integrated as a $\SO^{*}(2N)$ Lie group flow.

The resulting quadratic ansatz for the boundary Hamiltonian constraint, including the homogeneous local potential term and the puncture interaction terms, reads:
\be
\cH=
\beta_{2}\sum_{k}\la z_{k}| z_{k}\ra
+
\gamma_{2}\sum_{k,l} C_{kl}\la z_{k}| z_{l}\ra
+
\tilde{\gamma}_{2}\sum_{k,l} \Big{(}D_{kl}[ z_{k}| z_{l}\ra +\overline{D}_{lk}\la z_{k}| z_{l}]\Big{)}
+\dots\,,
\ee
where we have added the coupling constants $\gamma_{2}$ and $\tilde{\gamma}_{2}$ for respectively the area-preserving interactions and the area-changing interactions. The matrix $C$ can be assumed symmetric in order to ensure that the Hamiltonian be real. It generates an exchange of boundary area quanta between punctures. We refer to this term as the ``exchange Hamiltonian''. The coefficients $C_{kl}$ give the strength of the coupling between the two punctures $k$ and $l$. As suggested in \cite{Feller:2017ejs}, one could restrict this exchange Hamiltonian to nearest neighbour interactions. This is done by drawing a boundary graph between the punctures and taking the matrix $C$ to be the adjacency matrix of this graph. The boundary graph does not necessarily need to be planar, although this is a natural choice if the boundary topology is that of a 2-sphere. This boundary graph is interpreted as a notion of locality on the boundary, which can be thought as given a priori as induced by the bulk spin network structure (as in \cite{Feller:2017ejs}), or as a choice of boundary background structure, or defined a posteriori from the choice of Hamiltonian.
The local potential term can be considered simply as the diagonal terms of exchange matrix $C$.

The matrix $D$ encodes the coupled creation and annihilation of quanta of area on the boundary. We refer to this term as the ``expansion Hamiltonian''. Together the matrices $C$ and $D$ define the quadratic truncation of the Hamiltonian constraint. The exponentiated flow of this quadratic Hamiltonian can be integrated in terms of $\SO^{*}(2N)$ group elements. If we remove the expansion term and solely focus on the exchange Hamiltonian, the flow can be then more simply integrated in terms of $\U(N)$ group elements, which describe the deformation of the boundary surface at constant area \cite{Freidel:2009ck,Livine:2013tsa}.

This quadratic ansatz already seems rich enough, especially for the purpose of comparing the importance of boundary wave propagation (generated by the exchange Hamiltonian) versus  the expansion/shrinking of the boundary surface.
However, since the local potential term can be entirely re-absorbed in the quadratic exchange term, it seems natural to push its Taylor expansion to the next order and introduce quartic local terms $\la z_{k}| z_{k}\ra^{2}$ for each puncture, thus yielding a more complete ansatz for the boundary Hamiltonian constraint:
\be
\label{boundaryH}
\cH=
\beta_{2}\sum_{k}\la z_{k}| z_{k}\ra
+
\gamma_{2}\sum_{k,l} C_{kl}\la z_{k}| z_{l}\ra
+
\tilde{\gamma}_{2}\sum_{k,l} \Big{(}D_{kl}[ z_{k}| z_{l}\ra +\overline{D}_{lk}\la z_{k}| z_{l}]\Big{)}
+
\beta_{4}\sum_{k}\la z_{k}| z_{k}\ra^{2}
+\dots\,,
\ee
This new term moves the Hamiltonian outside of the $\so^{*}(2N)$ Lie algebra and the exponentiated flow can not be simply expressed in terms of $\SO^{*}(N)$ group elements. Putting the expansion term aside by setting $\tilde{\gamma}_{2}$ to zero, and assuming the matrix $C$ to be the adjacency matrix of a boundary graph, the truncation to area-preserving terms defines a {\it Bose-Hubbard} Hamiltonian, with a local potential $\sum_{k}\la z_{k}| z_{k}\ra^{2}$ balanced against an exchange term of quanta between nearest neighbours $\sum_{(k,l)}\la z_{k}| z_{l}\ra$. We call this the Bose-Hubbard truncation for the boundary dynamics. The quartic potential term will clearly affect the propagation of waves on the boundary. Beside possible Anderson localization phenomena, this will allow to study the propagation of perturbations on the boundary surface -ballistic versus diffusive- and the resulting relaxation towards homogeneous equilibrium (or, even, the possible instability of the homogeneous configuration).
It was even speculated in \cite{Feller:2017ejs} that such a Bose-Hubbard model applied to black hole horizons should lead to a phase transition between classical black hole with fast relaxation of the event horizon under local perturbations and a quantum regime for microscopic black holes where horizon perturbations will diffuse in a slower fashion making  the in-falling information more visible to the exterior observer. More generally, it would be interesting to identify which phase of the generic ansatz given above could correspond to  an isolated horizon and to a black hole or to other type of boundary conditions studied in general relativity.

We believe that this ansatz \eqref{boundaryH} for a boundary Hamiltonian constraint at fixed number of punctures $N$ is a good starting point for a systematic study of the boundary dynamics in loop quantum gravity. It could already lead to exciting new phenomena and predictions for quantum gravity.

\subsection{Boundary  Field Dynamics}
\label{subsec:field}

We have provided a natural ansatz for the loop quantum gravity boundary dynamics of flux excitations for a finite fixed number $N$ of punctures on the space-time corner. Geometrically this corresponds to a deep quantum regime where the boundary surface is made of $N$ elementary surface patches carrying quanta of area. In order to compare with the classical analysis of boundaries in general relativity (see the recent work in e.g. \cite{Harlow:2019yfa,Takayanagi:2019tvn,Freidel:2020xyx,Freidel:2020svx,Freidel:2020ayo}), it is appropriate to look at the classical regime for these boundary degrees of freedom. To this purpose, we consider the na\"ive continuum limit defined by the infinite refinement of the boundary surface, sending the number of punctures to infinity $N\rightarrow+\infty$. We call this ``na\"ive'' because we will not look at the dynamics of coherent states and seek to define a semi-classical regime at high energy and high action, but we will instead consider the boundary action \eqref{actionN}-\eqref{boundaryH} at fixed $N$ described in the previous section as the discretization of a continuous field theory. From a 2nd quantization viewpoint, the punctures can be thought of as ``boundary particles'' and the field theory considered as describing the regime where the number of punctures can (arbitrarily) fluctuate.

Introducing a 2d coordinate system $x^{k=1,2}$ on the boundary sphere, the sum over punctures becomes an integral over the 2-sphere. Assuming that the exchange and expansion terms in the finite $N$ Hamiltonian \eqref{boundaryH} are defined on terms of nearest neighbours on a (planar) lattice living on the 2-sphere, and taking the infinite refinement of this boundary lattice, we get an action for a continuous spinor field $z(x^{k})\in\C^{2}$ living on the 2+1 time-like boundary of space-time parametrized by the time coordinate $t$ and the 2d coordinates $x^{k}$:
\be
\label{actionfieldH}
S_{\pp}[z,\bz,\cN]
=
\int \rd t\,\int \rd^{2} x\, \bigg{[}
i\la z|\pp_{t} z\ra
-\cN\Big{[}
i \gamma^{k} \la z |\pp_{k} z \ra
+ \f i 2\left(\tilde{\gamma}^{k}[ z |\pp_{k} z \ra
+\overline{\tilde{\gamma}^{k}}\la z |\pp_{k} z ]\right)
+
\sum_{n}\beta_{n}\la z | z\ra^{n}
\Big{]}
\bigg{]}\,.
\ee
This is a first order action principle with kinetic terms and local potential. It is real (up to boundary terms\footnotemark) and has coupling constants $\beta_{n}$ for the Taylor expansion of the potential and $\gamma^{k},\tilde{\gamma}^{k}$ defining the geometric background.
\footnotetext{
Boundary terms of this boundary action live on the corner of the 2+1-dimensional time-like boundary of space-time, i.e. on 1d boundaries on the 3d spatial slices, i.e. on the contour around the punctures. Such terms will ultimately be relevant and should be studied in more details. From a topological field theory point of view, cells of every dimension carry algebraic data whose type depend on the dimensionality and represent the boundary charges generated by the corresponding boundary terms in the action.
}
Having usual derivative terms such as $\la z |\pp z\ra=\bz^{0}\pp z^{0}+\bz^{1}\pp z^{1}$ and holomorphic derivative terms such as $[ z |\pp z \ra=z^{0}\pp z^{1}-z^{1}\pp z^{0}$ can feel a little awkward. It could nevertheless become more natural if writing it as a four-dimensional spinor $\Psi=(|z\ra,|z])$ encompassing both the 2-spinor and its dual spinor. This 4-spinor does not however have any special meaning or role, so we do not pursue this possibility.


In order to better understand the meaning of this field theory, an interesting step is go back to a Lagrangian formulation. Since the field here is complex, we first separate its real and imaginary parts and then compute the inverse Legendre transform. This will yield a second order Lagrangian for the boundary theory. To illustrate the procedure, we start by presenting the example of a complex field instead of the spinor field.
Let us consider the following 1+1-d action in its Hamiltonian form, truncated to quadratic terms:
\be
S[z(x)]=\int \rd t\,\rd x\,
\Big{[}
i\bz \pp_{t} z -\cN  \f\beta2 z\bz- \cN i\gamma \bz \pp_{x} z
\Big{]}
\,.
\ee
Writing $z=(q+ip)/\sqrt{2}$, this action reads (up to total derivative terms, which we set aside):
\be
S[z(x)]=\int \rd t\,\rd x\,
\Big{[}
p \pp_{t}q -\cN \f\beta2 (q^{2}+p^{2})- \cN \gamma p \pp_{x} q
\Big{]}
\,.
\ee
We can consider $q$ as the configuration field, while $p$ is its conjugate momentum. Solving for the field $p$ yields:
\be
\cN \beta p=\pp_{t}q-\cN\gamma \pp_{x}q
\,.
\ee
Plugging this expression into the action gives its Lagrangian formulation:
\be
S[q(x),\pp_{t}q(x)]
=
\f1\beta\int \rd t\,\rd x\,
\Big{[}
\f1{2 \cN}(\pp_{t}q)^{2}-\gamma \pp_{t}q\pp_{x}q+\cN\f{\gamma^{2}}{2}(\pp_{x}q)^{2}-\cN\f{\beta^{2}}2 q^{2}
\Big{]}
\,.
\ee
Here we immediately recognise the action for a (real) massive scalar field living on a curved 1+1-d metric written in its ADM form with the lapse factor in the time direction.
Indeed, if we write a d+1-dimensional metric $h$ for a space-time foliation in terms of space-like slices according its ADM form,
\be
\rd s^{2}
=
h_{\mu\nu}\rd x ^{\mu}\rd x ^{\nu}
=
(\cN^{2}-\cN^{k}\cN_{k})\rd t^{2}-2\cN_{k}\rd t\rd x^{k}-c_{kl}\rd x^{k}\rd x^{l}
\,,
\ee
in terms of the lapse $\cN$, shift vector $\cN^{k}$ and the d-dimensional space metric $c$,
the action for a (real) massive scalar field $\phi$ reads:
\be
S_{h}[\phi]=
\int \rd t\rd^{d}x\,
\sqrt{c}\,
\left[
\f1{\cN} (\pp_{t}\phi)^{2}
-2\f{\cN^{k}}{\cN^{2}}\pp_{t}\phi\pp_{k}\phi
-\left(c^{kl}-\f{\cN^{k}\cN^{l}}{\cN^{2}}\right)\pp_{k}\phi\pp_{l}\phi
-m^{2}\cN\phi^{2}
\right]
\,.
\ee
For a 1+1-dimensional field, this leads to the identification of the coupling constants of the spinor field hamiltonian in terms of the space-time metric components and the field properties: the mass $m$ is identified to the quadratic potential coupling $\beta$ and the exchange coupling $\gamma$ is identified (up to a numerical factor) to the normalized shift vector  $\cN^{x}/\cN$.

Following the same procedure from the spinor field action \eqref{actionfieldH} in its Hamiltonian form truncated to quadratic terms (i.e. discarding the quartic potential and higher order terms), we decompose the two spinor components in their real and imaginary parts, $z^{A}=(q^{A}+ip^{A})/\sqrt{2}$, leading to a second order action for two coupled scalar fields, $\phi^{A}=q^{A}$ with $A=0,1$. Assuming for the sake of simplicity for the couplings $\tilde{\gamma}^{k}$ are real, the inverse Legendre transform of the action \eqref{actionfieldH} gives the following lagrangian:
\beq
S_{\pp}[\phi^{0},\phi^{1}]&=
\beta^{-1} \displaystyle\int \rd t\rd^{2}x\,
\Bigg{[}&
\f1{2\cN}(\pp_{t}\phi^{0})^{2}+\f1{2\cN}(\pp_{t}\phi^{1})^{2}
-\gamma^{k}\pp_{t}\phi^{0}\pp_{k}\phi^{0}-\gamma^{k}\pp_{t}\phi^{1}\pp_{k}\phi^{1}
-\tilde{\gamma}^{k}\pp_{t}\phi^{0}\pp_{k}\phi^{1}+\tilde{\gamma}^{k}\pp_{t}\phi^{1}\pp_{k}\phi^{0}\nn\\
&&+\f{\cN}2(\gamma^{k}\tilde{\gamma}^{k'}+\gamma^{k'}\tilde{\gamma}^{k})(\pp_{k}\phi^{0}\pp_{k'}\phi^{0}+\pp_{k}\phi^{1}\pp_{k'}\phi^{1})
+2\cN(\gamma^{k}\tilde{\gamma}^{k'}-\gamma^{k'}\tilde{\gamma}^{k})\pp_{k}\phi^{0}\pp_{k'}\phi^{1}
\nn\\
&&-\f{\cN}2\beta^{2}(\phi^{0})^{2}-\f{\cN}2\beta^{2}(\phi^{1})^{2}
\Bigg{]}\,.
\eeq
We recognise the action\footnotemark{} for a pair of coupled massive scalar field -or equivalently a 2-component scalar field $\phi^{A}$- on the 2+1-dimensional time-like boundary written in its ADM form, with an identification of the shift $\cN^{k}$ and corner metric $c$ with the spinor hamiltonian coupling constants $\gamma^{k},\tilde{\gamma}^{k}$.
\footnotetext{
One might wonder about the corner metric factor $\sqrt{c}$. It does not appear in the continuum limit of the presently studied discrete spinor action. Nevertheless, it can easily be made to appear by adding a puncture-dependent weight factor to the sum over punctures. This might indicate that this is a necessary ingredient of the discrete formulation in order to obtain a properly covariant continuum limit. Another possibility, which we won't study here, is to extract the 2d metric density from the norm of the spinors themselves, as hinted by the balance equation between 2d metric and flux norm derived in \cite{Freidel:2015gpa,Freidel:2018pvm}.
}
This little exercise of going back to the lagrangian from the postulated hamiltonian boundary dynamics shows that the spinor variables associated to the boundary punctures do not simply lead to an actual spinor field but can be written as scalar fields in the continuum limit. Here we do not refer to the statistics of the field, but to the type of covariant derivative to which it couples. Indeed, due to the intricate coupling between its two components, the field $\phi$ could still  acquire non-trivial statistics (at the quantum level), but this would require a detailed analysis of the physics of this boundary action.

Although the derivation of the continuum field theory in its hamiltonian form from the spinor dynamics at finite number of punctures, and then the derivation of the lagrangian field theory by an inverse Legendre transform, are straightforward analytical steps, this procedure faces a few hurdles:
\begin{itemize}
\item the role of higher order potential terms: 

In the spinor dynamics for a fixed number of boundary punctures $N$, the higher order potential terms, such the quartic Bose-Hubbard coupling, typically balance the exchange of quanta between punctures and modulate the propagation of waves on the boundary surface. Such terms, of the type $\la z|z\ra^{n}$ for $n\ge2$, involve higher powers of both the scalar field components $q^{A}$ and their conjugate momenta $p^{A}$. These terms not only lead to a more complicated relation between momenta and derivatives of the field, making much harder to perform explicitly the inverse Legendre transform, but they also involve high powers of the field derivative, leading most likely to a higher order lagrangian. One should probably seek inspiration from the analysis of the semi-classical regime and continuum limit of the Bose-Hubbard model in order to better understand the physics created by those higher order potential terms. 

\item the $\SU(2)$ invariance and the coupling between the two field components $\phi^{0}$ and $\phi^{1}$:

The spinor hamiltonian \eqref{boundaryH} for fixed number of puncture $N$, as well as the continuous spinor  action \eqref{actionfieldH}, are both obviously invariant under global $\SU(2)$ transformations acting on the whole spatial slice, $|z\ra\mapsto g|z\ra$ for $g\in\SU(2)$. On the other hand, once splitting the spinor variables into real and imaginary part, respectively defining the scalar field $\phi^{A}$ and its conjugate momentum, this obvious character of the $\SU(2)$ action is lost. Indeed, $\SU(2)$ transformations do not act simply on the scalar field components, but become Bogoliubov canonical transformations\footnotemark{} mixing the scalar field and its momentum field. Nonetheless, the action is still invariant under those transformations and this gauge invariance somehow reflects into the intricate structure of the coupling between the two field components $\phi^{0}$ and $\phi^{1}$. It would definitely be interesting to investigate further this symmetry, for instance understand the expression of the induced Noether charges in terms of the boundary scalar field.
%
\footnotetext{
$\SU(2)$ group elements naturally act as 2$\times$2 unitary matrices on complex 2-vector in its fundamental representation.  However, to our knowledge, the resulting $\SU(2)$ action on the real and imaginary parts of the 2-vector does not any straightforward geometrical or physical interpretation:
\be
\mat{cc}{\alpha &-\bar{\beta}\\ \beta & \bar{\alpha}}\mat{c}{z^{0}\\z^{1}}
=
\mat{cc}{e^{i\vphi}\cos\theta &-e^{-i\psi}\sin\theta\\ e^{i\psi}\sin\theta & e^{-i\vphi}\cos\theta}\mat{c}{q^{0}+ip^{0}\\q^{1}+ip^{1}}
=
\mat{c}{\tilde{q}^{0}+i\tilde{p}^{0}\\\tilde{q}^{1}+i\tilde{p}^{1}}
\,,\nn
\ee
\be
\textrm{with}\quad\left|
\begin{array}{lcl}
\tilde{q}^{0}&=&\cos\theta\cos\psi q^{0}-\cos\theta\sin\psi p^{0}-\sin\theta\cos\psi q^{1}-\sin\theta\sin\psi p^{1} \,,\\
\tilde{p}^{0}&=&\cos\theta\sin\psi q^{0}+\cos\theta\cos\psi p^{0}+\sin\theta\sin\psi q^{1}-\sin\theta\cos\psi p^{1} \,,\\
\tilde{q}^{1}&=&\sin\theta\cos\psi q^{0}-\sin\theta\sin\psi p^{0}+\cos\theta\cos\psi q^{1}+\cos\theta\sin\psi p^{1} \,,\\
\tilde{p}^{1}&=&\sin\theta\sin\psi q^{0}+\sin\theta\cos\psi p^{0}-\cos\theta\sin\psi q^{1}+\cos\theta\cos\psi p^{1} \,.
\end{array}
\right.
\nn
\ee
This is a canonical transformation, leaving the canonical Poisson brackets invariant $\{\tilde{q}^{0},\tilde{p}^{0}\}=\{\tilde{q}^{0},\tilde{p}^{0}\}=1$ and $\{\tilde{q}^{0},\tilde{p}^{0}\}=0$, but mixing linearly the configuration variables and their momenta. It is thus a Bogoliubov transformation.
}

\item the unknown physical meaning of the boundary scalar field $\phi$:

Although the initial spinor variables attached to the punctures on the space-time corner have a clear interpretation in terms of geometrical flux vectors (representing the triad field) at the classical level and then as ($\SU(2)$-covariant) creation and annihilation operators of quanta of area at the quantum level, the scalar field defined as the real part of the continuum limit of those spinors do not yet have an enlightening physical meaning. On the one hand, it is not clear to which boundary field of general relativity the scalar field $\phi^{A}$ should correspond to. On the other hand, its behaviour under $\SU(2)$ transformations and its action with a specific coupling between its two components should be analysed in detail in order. to understand the defining properties of this boundary field.

\item the apparently asymmetric role of the boundary lapse and boundary shift:

We have started from an action for a discrete set of configurations with a postulated lapse variable $\cN$ enforcing a Hamiltonian constraint and, after taking the continuum limit of the 2d boundary and performing an inverse Legendre transform,  ended up with a covariant action for a boundary (scalar) field living on the time-like boundary of space-time with a 2+1-dimensional metric decomposed in terms of lapse, shift and the 2d corner metric. So this is a paradoxical situation. At the end of the day, the lapse and shift seems to play similar roles as Lagrange multipliers for the space-time boundary diffeomorphim constraints, although in the beginning there was no notion of shift vector. This is due to the a priori discreteness of the quantum spatial geometry versus the assumed continuous nature of time in the loop quantum gravity framework.

We see three (non-excluding and possibly intertwined) ways forward.
First, there could be a (hidden) symmetry of the spinor action for finite $N$ under some specific exchange of area quanta between punctures\footnotemark{}, which would become the invariance under spatial boundary diffeomorphisms in the continuum limit and whose Noether charges should carry a discrete version of the boundary charges for general relativity. One should nevertheless keep in mind that the boundary symmetry depend on the chosen boundary conditions in the continuum theory, and that this begs the question of which type of boundary conditions do boundary flux excitations correspond to.
\footnotetext{
It was hinted in \cite{Girelli:2005ii,Freidel:2009ck,Freidel:2010tt,Livine:2013gna} that the boundary diffeomorphisms in the discrete loop quantum gravity setting should be related to the $\U(N)$ transformations generated by the $\la z_{k}|z_{l}\ra$ observables. It looks more realistic that they might be realised as field-dependent $\U(N)$ transformations (i.e. whose transformation parameters depend on the spinors themselves) (in a similar way that diffeomorphisms can be written as field-dependent translations, e.g. \cite{Freidel:2019ofr}), or that they need to be defined, one level higher, as fusion operators changing the number of punctures $N$ (as when trying to define a discrete equivalent of the Virasoro operators on tensor networks).
}
This leads us to the second point:
the role of boundary conditions. More precisely, should the lapse and shift be dynamical fields on the boundary? Or should they be held fixed on the boundary? In the latter case, we should not work with a Hamiltonian constraint for the spinor dynamics but directly with a Hamiltonian, i.e. we do not require that the Hamiltonian vanishes but it could have a non-zero value, and we do not consider that it generates gauge transformations but more simply boundary symmetries (see \cite{Donnelly:2016auv} for a discussion of gauge versus symmetry). In fact, it is a more general question: which component(s) of the boundary 2+1-d metric should be fixed and which component(s) should be dynamical? There are several apparent possibilities, playing around the lapse, shift, 2d corner metric, the time-like extrinsic curvature, the spatial extrinsic curvature,... This is deeply related to the study of edge modes in general relativity \cite{Freidel:2020xyx,Freidel:2020svx}.
The third path is a more drastic change: if space acquires a discrete nature at the quantum level, then time could/should do. For instance, that's what happens in 't Hooft's polygonal quantization of 2+1-d gravity \cite{tHooft:1993qop,tHooft:1993jwb}, and that's what is postulated in causal dynamical triangulations \cite{Loll:2019rdj}. And that's what is worked out in spinfoam models for a loop quantum gravity path integral (see e.g. \cite{Livine:2010zx,Perez:2012wv} for reviews): space and time intervals become discrete geometrical objects\footnotemark{} (represented as 2d cells).
\footnotetext{
One should distinguish the two different notions (and thereby sources) of discreteness: working with discrete elements of geometry (cells of various dimensions) which we glue together to make the space-time and providing those cells with geometric operators and observables with discrete spectrum. For instance, various area operators for 2d cells in loop quantum gravity can be defined  either with a continuous or discrete spectrum.
This is similar to the discreteness of particles versus the continuity of fields and the discreteness versus the continuity of their momenta.
}
Then the whole 2+1-dimensional boundary need to be described in discrete terms and not only as a discrete 2D spatial boundary evolving in continuous time. This was investigated in the context of the Ponzano-Regge topological state-sum for 2+1-d gravity with time-like boundaries defined at the quantum level in terms of a 1+1-d network structure \cite{Dittrich:2017hnl,Dittrich:2018xuk,Goeller:2019zpz}. However it is then not obvious to come back to a Hamiltonian formulation, which needs to be derived a posteriori from a notion of boundary transfer matrix similarly to what is done with quantum integrable spin systems.

\end{itemize}

{\bigskip}

Despite these gaps in our understanding, the main lesson to draw here is that the polynomial spinor Hamiltonian  for $N$ punctures on the 2d spatial boundary in loop quantum gravity can be written as a $\sigma$-model for a massive two-component scalar field living on the 2+1-d time-like boundary of space-time, with an action roughly of the type $S[\phi]=\int \sqrt{h}\big{[}h^{\mu\nu}K_{AB}\pp_{\mu}\phi^{A}\pp_{\nu}\phi^{B}-m^{2}\phi^{2}\big{]}$. This new correspondence  allows to translate the couplings constants of the spinor Hamiltonian into the components of the 2+1 boundary metric. This boundary metric plays the role of a background geometry in which the flux excitations evolve, just as the spinor couplings dictate the dynamics of the quanta of area living on the punctures in the quantum regime.

In other words, we seek to describe a boundary theory, in which flux excitations evolve over a vacuum state on the space-time boundary. This boundary dynamics  effectively depend on the chosen vacuum state, which translates into particular values for the coupling constants determining the boundary Hamiltonian. In the (na\"ive) continuum limit that we have investigated, these coupling constants translate into a background metric on the 2+1-dimensional time-like boundary. This naturally leads to the conjecture of a correspondence (at the leading polynomial order) between the vaccuum state on loop quantum gravity's corner of space-time (on which the spin network punctures - a.k.a. the area quanta- evolve) and the 2+1 boundary metric.
This conjectured correspondence deserves  further study, in order to determine how far it can be pushed and to which extent it can be made explicit and/or exact.

There remains the (deeper) question of whether the  boundary metric, and thus the coupling constants of the spinor dynamics, should become dynamical, for instance by adding a gravitational term governing the evolution of the boundary metric and thus of the coupling constant. This is the question usually summarised  as ``what boundary conditions should we choose?'' This is deeply intertwined with the definition and fate of gravitational edge modes and with the possibility (or impossibility) of representing (a suitable discrete version of) the boundary charges of general relativity on discrete quantum geometry states.

\section{Lorentz Connection on the Boundary}
\label{sec:sl2c}

In view of the unclear physical meaning of the boundary scalar field derived from the spinors, we would like to investigate an alternative reformulation of the boundary spinor dynamics as a theory of a $\SL(2,\C)$ connection, which could offer further possibilities of comparison with gauge theory reformulations of general relativity boundaries.

This section is thus dedicated to showing how the boundary data defined by the spinors living at the punctures can be described in terms of discrete $\SL(2,\C)$ connections. These connections are required to be flat up to a stabiliser group. The special case of flat connections are also provided with a geometrical interpretation. This hints towards a formulation of boundary theories for loop quantum gravity on space-time corners in terms of $\SL(2,\C)$ gauge theories.

\subsection{$\SL(2,\C)$-holonomies between spinors}
\label{subsec:spinorholo}

Consider two spinors $z_{i}$ and $z_{j}$ living on two punctures. Two arbitrary spinors can not be related by a $\SU(2)$ transformation, since they are not constrain to have equal norm. So mapping a quantum of area onto another can not made solely by a $\SU(2)$ holonomy. This would simply change the (normal) direction of the elementary surface but can not change the size of the quantum of area.
To change the norm of the spinor requires using enlarging the possible transformations by allowing for dilatations, for instance by moving up from $\SU(2)$  to $\SL(2,\C)$ group elements. Indeed, two spinors can be mapped onto each other by $\SL(2,\C)$ group elements. From the perspective of a bulk-to-boundary coarse-graining as discussed in \cite{Livine:2013gna,Livine:2019cvi}, such a Lorentz transformation would account for both the non-trivial $\SU(2)$ transport within the bulk from one boundary puncture to another, but also for the changes of spin occurring at every bulk vertex, as illustrated on fig.\ref{fig:throughthebulk}, leading overall to the propagation from one boundary puncture to another through a $\SL(2,\C)$ holonomy.
\begin{figure}[!htb]
%
%
%
%
%
%
%
\includegraphics[height=40mm]{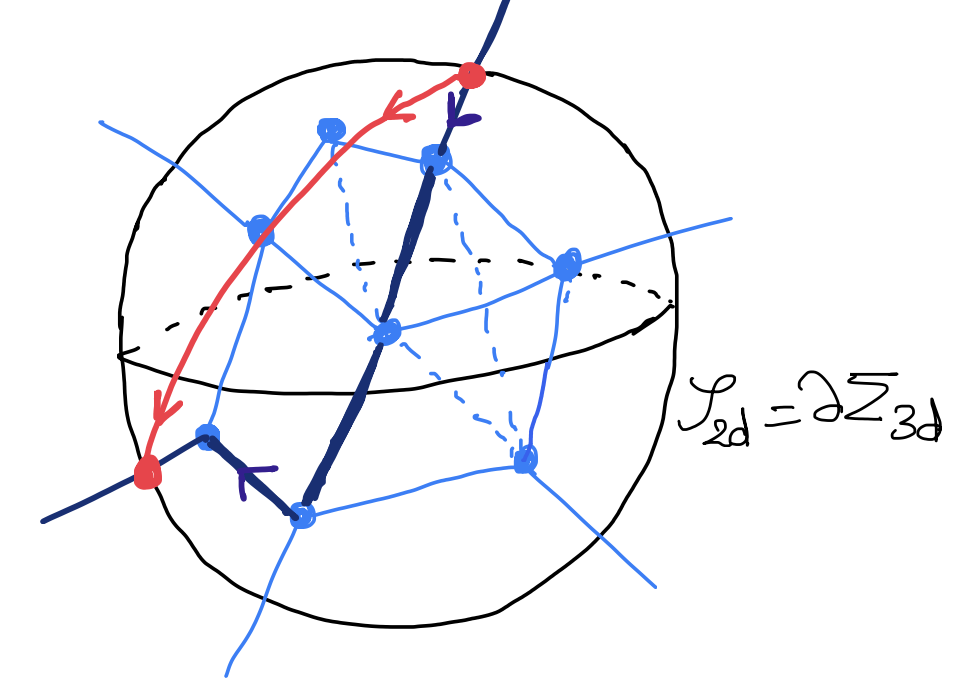}
	\caption{Transport between punctures on the boundary (in red) mapping a quantum of area onto another versus transport along a curve (in bold purple) diving into the bulk following the spin network graph (in blue) spanning the 3d geometry: the boundary transport between punctures, formalized as $\SL(2,\C)$ group elements, can be considered as a coarse-graining of the bulk geometry, projected onto the boundary surface.}
	\label{fig:throughthebulk}
\end{figure}

More precisely, an arbitrary $\SL(2,\C)$ group element is defined by 6 real parameters, while a spinor is determined by 4 real parameter. So the Lorentz holonomy $G_{ij}\in\SL(2,\C)$ relating the two spinors  $z_{i}$ and $z_{j}$ on the boundary is not uniquely determined by those spinors. It is determined up to a stabilizer group element of the initial spinor $z_{i}$, or equivalently of the target spinor $z_{j}$.

Let us choose a reference spinor, the up complex vector $|\uparrow\ra=(1,0)$.
Its stabilizer group consists in the set $\cT\subset\SL(2,\C)$ of upper triangular matrices with trivial diagonal:
\be
|\uparrow\ra=\mat{c}{1 \\ 0}
\,,\qquad
G\,|\uparrow\ra
\,=\,
|\uparrow\ra
\quad\Longleftrightarrow\quad
G
\,=\,
\mat{cc}{1 & \mu \\ 0 &1}
\in\cT
\quad\textrm{with}\,\,
\mu\in\C
\,.
\ee
%
%
Then  $\SL(2,\C)$ group elements mapping the reference spinor to an arbitrary spinor is simply given by the Iwasawa decomposition.
Indeed a $\SL(2,\C)$ group element can be uniquely decomposed as the product of an upper triangular matrix and a $\SU(2)$ group element\footnotemark{}:
\be
G=g \Delta_{\lambda} t_{\mu}\,\qquad\textrm{with}\quad
g\in\SU(2)\,,\quad
\Delta_{\lambda}=\mat{cc}{\lambda & 0 \\ 0 & \lambda^{-1}}\in\cD\,,\quad
t_{\mu}=\mat{cc}{1 & \mu \\ 0 &1}\in\cT\,,
\ee
where $\lambda\in\R_{+}$ defines a dilatation and $\mu\in\C$ defines a translation.
\footnotetext{
One could also assume $\lambda\in\R$, in which case $g$ belongs to $\SU(2)/\Z_{2}\sim \SO(3)$. Or even assume that $\lambda\in\C$, in which case $g\in\SU(2)/\U(1)$ where we remove from $\SU(2)$ the rotations generated by $\sigma_{z}$ and included them in the dilatations.}
This decomposition describes the space of spinors $\C^{2}$ as a section of the coset $\SL(2,\C)/\cT$. Indeed all the $\SL(2,\C)$ transformations mapping the reference spinor to a given spinor $z\in\C^{2}$ are given by a group element $\Lambda_{z}\in\SU(2)\times\cD$ uniquely determined by $z$ times an arbitrary translation $t\in\cT$:
\be
z=\Lambda_{z}t\,|\uparrow\ra
\,,
\qquad\textrm{with}\quad
\Lambda_{z}=\mat{cc}{z^{0} & \f{-\bz^{1}}{\la z|z\ra}\\z^{1} & \f{\bz^{0}}{\la z|z\ra}}
\quad\textrm{and}\quad
t\in\cT\,,
\ee
where the section $\Lambda_{z}$ more precisely decomposes into a dilatation to adjust the norm composed with a $\SU(2)$ rotation:
\be
z
=\Lambda_{z}\,|\uparrow\ra
=g_{\hat{z}}\Delta_{\lambda}\,|\uparrow\ra
\,,\quad
\lambda=\sqrt{\la z|z\ra}
\,,\quad
\hat{z}=\f{z}{\sqrt{\la z|z\ra}}
\,,\quad
g_{\hat{z}}
\,=\,
\f{|{z}\ra\la\uparrow|+|{z}][\uparrow|}{\sqrt{\la z|z\ra}}
\,,\quad 
g_{\hat{z}}\,|\uparrow\ra=\hat{z}
\,.
\ee
Since the stabilizer group of the reference spinor is $\cG_{\uparrow}=\cT$, the stabilizer group for a non-vanishing spinor $z$ is thus obtained by conjugation as $\cG_{z}=\Lambda_{z}\cT\Lambda_{z}^{-1}$.

\medskip

Now we would like to trade the information  of the $N$ spinors $z_{i}$ living on the boundary punctures with the transport information between punctures given $\SL(2,\C)$ holonomies $G_{ij}\in\SL(2,\C)$ such that $G_{ij}\,|z_{i}\ra=\,|z_{j}\ra$. These $\SL(2,\C)$ group elements define a discrete Lorentz connection  on the boundary.

These holonomies clearly can not be arbitrary. Indeed, if we go around a loop from punctures to punctures on the boundary to come back to the initial puncture, the overall holonomy bring map the initial spinor to itself and must therefore lay in its stabilizer. This means that the discrete Lorentz connection is almost flat, in the sense that it must effectively project down to a $\cT$ connection.
More precisely, looking at a loop of punctures $i_{1}\rightarrow i_{2}\rightarrow..\rightarrow i_{n}\rightarrow i_{1}$, the holonomy around the loop is not constrained to the identity but lays in the stabilizer of the initial spinor:
\be
G=G_{i_{n}i_{1}}..G_{i_{1}i_{2}}
\,,\quad
G\,|z_{i_{1}}\ra=\,|z_{i_{1}}\ra
\,,\quad
G\in\cG_{z_{i_{1}}}=\Lambda_{z_{i_{1}}}\cT\Lambda_{z_{i_{1}}}^{-1}
\,.
\ee
In more details, each Lorentz holonomy can be decomposed in $\Lambda$ and triangular matrices:
\be
G_{ij}=\Lambda_{z_{j}}t_{ij}(\Lambda_{z_{i}})^{-1}
\quad\textrm{with}\quad t_{ij}\in\cT\,,\qquad
G_{i_{n}i_{1}}..G_{i_{1}i_{2}}=\Lambda_{z_{i_{1}}}\,(t_{i_{n}i_{1}}..t_{i_{1}i_{2}})\,\Lambda_{z_{i_{1}}}^{-1}
\,.
\ee
The fact that a $\SL(2,\C)$ group element $G$ belongs to one of those stabilizer groups is equivalent to requiring that its trace is equal to 2:
\be
\exists z\in\C^2\,,\,\,G\in\cG_{z}\quad\Longleftrightarrow\quad
\tr\, G =2\,.
\ee
Indeed, if $G$ has its trace equal to 2, and if it belongs to $\SU(2)$ or  if it is a dilatation, then it is necessarily the identity group element.
More technically, $\tr\,G=2$, combined with $\det G=1$, implies that the determinant $\det(G-\id)$ vanishes, which means that $G$ is a triangular matrix with trivial diagonal up to a change of orthonormal basis.

This means that we can replace the boundary data of $N$ spinors $z_{i}$ by the data of a discretized $\SL(2,\C)$ connection, defined by group elements $G_{ij}$ between punctures, with a holonomy constraint around each loop. That holonomy constraint requires that the trace of the holonomy around each loop is equal to 2 and is equivalent to requiring that all those holonomies around loops live in a spinor stabilizer. This can be set as a mathematical proposition:
%
%
\begin{prop}
We call a discrete Lorentz connection on the boundary a collection of $N(N-1)/2$  group elements $G_{ij}\in\SL(2,\C)$ (linking punctures), with the orientation convention that $G_{ji}=G_{ij}^{-1}$. We introduce a holonomy constraint for every  cycle of puncture on the boundary:
\be
\forall \cC \,\textrm{ cycle between boundary punctures }\,,\quad
\tr\bigg{[}\overleftarrow{\prod_{\ell\in\cC}} G_{\ell}\bigg{]}=2
\,.
\ee
This constraint ensures that every Lorentz holonomy around a closed loop is conjugated to a unique group element in $\cT$ (also referred to as a translation).
Then there is a one-to-one correspondence between constrained discrete Lorentz connections $G_{ij}$ and collections of $N$ spinors $z_{i}\in\C^{2}$ (up to a global sign) times discrete $\cT$-connections, i.e. collection of $N(N-1)/2$  group elements $t_{ij}\in\cT$, such that the spinors are transported by the $\SL(2,\C)$ holonomies, $G_{ij}|z_{i}\ra=|z_{j}\ra$, and $G_{ij}=\Lambda_{z_{j}}t_{ij}\Lambda_{z_{i}}^{-1}$.
\end{prop}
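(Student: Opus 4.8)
The plan is to present the correspondence as a pair of mutually inverse maps and to show that the holonomy constraint is precisely what makes the reconstruction consistent. In the forward direction, given spinors $\{z_i\}$ and a $\cT$-connection $\{t_{ij}\}$ (extended by $t_{ji}=t_{ij}^{-1}$ to match the convention $G_{ji}=G_{ij}^{-1}$), I set $G_{ij}:=\Lambda_{z_j}t_{ij}\Lambda_{z_i}^{-1}$. Since $\Lambda_{z_i}\,|\uparrow\ra=|z_i\ra$ and every $t_{ij}$ fixes $|\uparrow\ra$, one reads off $G_{ij}|z_i\ra=\Lambda_{z_j}t_{ij}|\uparrow\ra=\Lambda_{z_j}|\uparrow\ra=|z_j\ra$ at once, and $G_{ji}=G_{ij}^{-1}$ is built in. The only nontrivial check is that these $G_{ij}$ satisfy the holonomy constraint, and here the algebraic structure does the work.

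Along a cycle $i_1\to\cdots\to i_n\to i_1$ the products telescope, $G_{i_ni_1}\cdots G_{i_1i_2}=\Lambda_{z_{i_1}}\big(t_{i_ni_1}\cdots t_{i_1i_2}\big)\Lambda_{z_{i_1}}^{-1}$, and because $\cT$ is a group the bracketed factor lies again in $\cT$ and hence has trace $2$; since the trace is conjugation invariant, $\tr=2$ holds automatically. Thus the forward map lands in the constrained connections and, as a bonus, exhibits every loop holonomy as conjugate to a single translation, exactly as the constraint requires.

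The reverse map is the substantive part. Fixing a base puncture, say $1$, I want to recover a spinor $z_1$ and transport it as $|z_j\ra:=G_{1j}|z_1\ra$. Consistency of the whole family, $G_{ij}|z_i\ra=|z_j\ra$ for all pairs, amounts to every based-loop holonomy (in particular each triangle holonomy $G_{j1}G_{ij}G_{1i}$) fixing $z_1$. The key lemma is that the constraint forces all based loops to share a common fixed line. A trace-$2$ element of $\SL(2,\C)$ is unipotent and fixes a unique line; if two such holonomies $U,V$ fixed different lines, one could conjugate so that $U=\mat{cc}{1 & 1 \\ 0 & 1}$ and $V=\mat{cc}{1 & 0 \\ c & 1}$ with $c\neq 0$, whence $\tr(UV)=2+c\neq 2$ would violate the constraint on the composite cycle. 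Hence the based-loop holonomy group fixes a single line; I take $z_1$ on it, and every such holonomy then fixes $z_1$ \emph{exactly} (a unipotent fixes every vector on its eigenline), so the transported family $\{z_i\}$ is well defined and path independent. Setting $t_{ij}:=\Lambda_{z_j}^{-1}G_{ij}\Lambda_{z_i}$, the relation $G_{ij}|z_i\ra=|z_j\ra$ forces $t_{ij}|\uparrow\ra=|\uparrow\ra$, and together with $\det t_{ij}=1$ this places $t_{ij}\in\cT$. Composing the two maps then returns the original data, since $G_{ij}=\Lambda_{z_j}t_{ij}\Lambda_{z_i}^{-1}$ by construction.

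The step I expect to be most delicate is pinning down the residual ambiguity and matching it to the claimed ``global sign''. The common fixed line determines $z_1$ only up to a scale, and a uniform rescaling $z_i\mapsto \zeta z_i$ is absorbed by $\Lambda_{\zeta z}=\Lambda_z\,\mat{cc}{\zeta & 0 \\ 0 & \zeta^{-1}}$ together with the compensating conjugation $t_{ij}\mapsto \mat{cc}{\zeta^{-1} & 0 \\ 0 & \zeta}\,t_{ij}\,\mat{cc}{\zeta & 0 \\ 0 & \zeta^{-1}}\in\cT$, which leaves every $G_{ij}$ unchanged. The choice $\zeta=-1$ reproduces exactly the flip $z_i\mapsto -z_i$ (leaving each $t_{ij}$ fixed, since $\Lambda_{-z}=-\Lambda_z$), so this $\Z_2$ is the ``global sign''. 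To obtain the stated bijection one must argue that the norm and overall phase of $z_1$ are fixed by the physical normalization of the spinors (their reading as quanta of area), so that only this sign survives; without such a normalization the honest quotient is by the full overall complex rescaling, and reconciling that with the ``up to a global sign'' phrasing is the one point I would scrutinize. Everything else follows from the computations above.
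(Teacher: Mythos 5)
Your proof is correct and follows essentially the same route as the paper's: the forward direction is the same telescoping-plus-conjugation-invariance argument, and your key lemma --- two non-identity unipotents with distinct fixed lines have $\tr(UV)=2+c\neq 2$ --- is exactly the paper's computation $\tr\, G_{2}G_{1}=2-\mu_{1}\mu_{2}[\om_{2}|\om_{1}\ra^{2}$ for the concatenated cycle, written in an adapted matrix basis rather than in bra-ket form; the reconstruction by transport from a base puncture and the triangle-consistency check are likewise identical. Your closing scrutiny of the residual ambiguity is also well-founded rather than a defect of your argument: in the paper's own reconstruction the conjugating element $G$ in $G_{\cC}=G\,t_{\cC}\,G^{-1}$ is fixed only up to right multiplication by an upper-triangular matrix, which rescales $z_{i_{0}}$ by an arbitrary element of $\C^{*}$ while conjugating every $t_{ij}$ by the corresponding dilatation, so the full redundancy is indeed $(\{z_{i}\},t_{\mu_{ij}})\mapsto(\{\zeta z_{i}\},t_{\zeta^{-2}\mu_{ij}})$. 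The advertised ``global sign'' is the precise ambiguity only if one additionally holds the $\cT$-connection data fixed (and nontrivial), since $\zeta=\pm1$ are the only rescalings leaving all the $t_{ij}$ unchanged; your observation therefore sharpens the statement of the proposition rather than revealing a gap in your proof.
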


\noindent
The sign ambiguity corresponds to a global flip of  the signs, $z_{i}\rightarrow -z_{i}$, which doesn't change the $\SL(2,\C)$ holonomies.

\begin{proof}

From the definitions given above, if we start from $N$ spinors $z_{i}$, we define the group elements $\Lambda_{z_{i}}$ and combine them with the discrete $\cT$-connection $t_{ij}\in\cT$ to define the $\SL(2,\C)$ holonomies between punctures as $G_{ij}=\Lambda_{z_{j}}t_{ij}\Lambda_{z_{i}}^{-1}$. The $\Lambda_{z}$'s were defined to automatically imply  that $G_{ij}|z_{i}\ra=|z_{j}\ra$. Moreover the (ordered) product of the $G_{ij}$ around cycles between punctures automatically gives group elements conjugated to a translation, thus with trace equal to 2.

What remains is to prove the reverse. Let us start with a collection of $\SL(2,\C)$ group elements $G_{ij}$ satisfying the holonomy trace constraint around every cycle on the boundary. We need to reconstruct the spinors and the translations. Let us choose a root puncture $i_{0}$ and consider a cycle $\cC$ starting and finishing at $i_{0}$ (for example, a triangle $i_{0}\rightarrow i_{1}\rightarrow i_{2}\rightarrow i_{0}$). The holonomy $G_{\cC}$ around that cycle has a trace equal to 2 and thus is conjugated to a translation $t_{\cC}$:
\be
G_{\cC}=G t_{\cC} G^{-1}\,,
\ee
for some group element $G\in\SL(2,\C)$. We Iwasawa decompose this group element, $G=g \Delta t$, so that $G_{\cC}=(g\Delta) t_{\cC} (g\Delta)^{-1}$. We define the spinor at the puncture $i_{0}$ as the image of the up spinor by $g \Delta$:
\be
z_{i_{0}}=g\Delta \,|\uparrow\ra\,,
\quad\textrm{which implies that} \quad \Lambda_{z_{i_{0}}}=g\Delta
\,.
\ee
We can then transport this spinor to define the spinors at every other puncture, $z_{i}=G_{i_{0}i}z_{i_{0}}$. We still have to check that this is a consistent definition, i.e. that $z_{j}=G_{ij}z_{i}$. This is equivalent to checking that $z_{i_{0}}$ is stabilized by the holonomy around the triangle  $i_{0}\rightarrow i\rightarrow j\rightarrow i_{0}$:
\be
z_{j}=G_{ij}z_{i} \Leftrightarrow  G_{i_{0}j}z_{i_{0}}=G_{ij}G_{i_{0}i}z_{i_{0}}
\Leftrightarrow  z_{i_{0}}=G_{i_{0}j}^{-1}G_{ij}G_{i_{0}i}z_{i_{0}}
\,.
\ee
This means that we would have obtained the same spinor $z_{i_{0}}$ if we had used the cycle $i_{0}\rightarrow i\rightarrow j\rightarrow i_{0}$ instead of $\cC$.

To show this, it is enough to consider two cycles $\cC_{1}$ and $\cC_{2}$, with $\SL(2,\C)$ holonomies $G_{1}$ and $G_{2}$, starting at $i_{0}$ and finishing at $i_{0}$, and prove that the spinor stabilized by $G_{1}$ is also stabilized by $G_{2}$. As explained above, since $G_{1}$ is conjugated to a translation, we can write:
\be
G_{1}=\Lambda_{\om_{1}} t_{\mu_{1}}\Lambda_{\om_{1}}^{-1}=\id+\mu_{1}|\om_{1}\ra[\om_{1}|\,,
\ee
for a spinor $\om_{1}\in\C^{2}$, and the same for $G_{2}$. The holonomy for the concatenated cycle $(\cC_{1};\cC_{2})$ then reads:
\be
G_{2}G_{1}
=
\id+\mu_{1}|\om_{1}\ra[\om_{1}|+\mu_{2}|\om_{2}\ra[\om_{2}|
+
\mu_{1}\mu_{2}[\om_{2}|\om_{1}\ra\,|\om_{2}\ra[\om_{1}|\,,\qquad
\tr G_{2} G_{1}=2-\mu_{1}\mu_{2}[\om_{2}|\om_{1}\ra^{2}\,.
\ee
The holonomy constraint around the concatenated cycle $(\cC_{1};\cC_{2})$ thus implies that the holomorphic scalar product between the two spinors vanishes, $[\om_{2}|\om_{1}\ra=0$, which in turn implies that $\om_{2}$ is proportional to $\om_{1}$. In particular, $\om_{2}$ is also stabilized by $G_{1}$ and vice-versa $\om_{1}$ is also stabilized by $G_{2}$.

\end{proof}

We would like to point out a variation about the reconstruction of $\SL(2,\C)$ holonomies from the boundary spinor data. It is traditional in standard works in loop quantum gravity to split the spinor $z\in\C^{2}$ in its flux vector $\vX\in\R^{3}$ (carrying information about the embedding of the surface patch within the 3d space) and its phase (interpreted as the twist angle, carrying information about the 2+1 embedding of the surface patch in the time direction). Putting the phase aside and focusing on the flux vector, one can introduce $\SL(2,\C)$ group elements mapping the flux vector $\vX_{i}$ at one boundary puncture onto the flux vector $\vX_{j}$ at another puncture. As we show in the appendix \ref{app:vector-sl2C}, this boost action has a $\SU(1,1)$ stabilizer group, thereby realizing a 3+3 splitting of $\SL(2,\C)$ instead of the 4+2 splitting used above when acting on spinors. The holonomy condition is then looser. Instead of enforcing that the trace of the $\SL(2,\C)$ around boundary cycles is necessarily 2, it can now be an arbitrary real number strictly larger than 2. We do not pursue in this direction and focused instead of the spinor variables.

\medskip

Here, we will not study in details the possible symplectic structures that one can endow the space of discrete (constrained) Lorentz connections with and the question of their matching with the canonical Poisson bracket on the spinors. There is not a straightforward obvious answer. On the one hand, there are (well-known) ambiguities on the definition of symplectic structures on discrete Lorentz connections and $\cT$ connections, and on the other hand, it is not clear if the $\Lambda_{z}$'s are the best choice of section and what is supposed to be the brackets of the spinor $z$ with the translation parameter $\mu$ of the upper triangular matrix.
On top of these ambiguities, there is also the question of the precise role of the holonomy constraints. For instance, should we define the  symplectic structure before or after imposing the holonomy constraints? Do the holonomy constraints form a set of first class constraints generating a gauge invariance and leading to a symplectic quotient? Following previous work on the combinatorial quantization of Chern-Simons theory \cite{Buffenoir:2002tx} and the related phase space of 2+1-d loop quantum gravity with a cosmological constant \cite{Dupuis:2013haa,Bonzom:2014wva,Bonzom:2014bua,Dupuis:2014fya,Dupuis:2019yds}, it is tempting to hope that the holonomy constraints would generate a kind of translations, but we postpone such analysis to future investigation.

We will focus instead on the geometrical interpretation of discrete Lorentz connections, on the formulation of a dynamical boundary theory and their possible continuum limit.


\subsection{Non-trivial stabilizer and (relative) locality on the boundary}

In this setting with the correspondence between spinors and  discrete $\SL(2,\C)$-connections, the next natural question is whether we can identify a geometrical or physical meaning to the extra data carried by the Lorentz connections compared to the spinors, i.e. provide an interpretation to the stabilizer group data and the discrete $\cT$-connection.

Since the stabilizer group $\cT$ is a two-dimensional abelian Lie group, isomorphic to $\C$ (provided with the addition), we propose to interpret these group elements as translation on the 2d boundary. More precisely, assuming that the 2d boundary has a spherical topology, we see it as a complex manifold and parametrize it in terms of a complex variable $\zeta$ locating points on the boundary. We propose to interpret the extra data contained in the $\SL(2,\C)$  holonomies on top of the spinors as position coordinates $\zeta_{i}\in\C$ for each puncture.

\medskip

Let us take a step back and reflect on the structure of (boundary) surfaces in loop quantum gravity.
A surface consists in a set of quanta of areas carried by the spin network punctures. These quanta of areas are defined mathematically as spin states resulting from the (canonical) quantization of spinors. Each spin state is geometrically interpreted as an elementary surface patch carrying a (quantized) vector, whose norm gives the quantized area in Planck unit and whose direction is the normal direction to the surface, and a $\U(1)$ phase called the twist angle, which indicates the extrinsic curvature integrated over the surface patch. In this traditional setting for loop quantum gravity, there is absolutely no information on where a surface patch is located with respect to other patches on the overall (boundary) surface.

Indeed, in the background independent framework of loop quantum gravity, it is the spin network state - its underlying graph and the algebraic data dressing it- which defines the 3d space (quantized) geometry. Reconstructing the overall 3d geometry of a (large) region of the spin network is not straightforward, it is a non-local reconstruction and recognising two graph nodes as close or far is a hard question\footnotemark.
\footnotetext{There is actually no rigorous theorem proving that this reconstruction is systematically possible and unique. The reader will find a very interesting discussion of possible non-locality effects resulting from the background independence of spin network states in \cite{Markopoulou:2007ha}.}
In fact, considering a region with a 3-ball topology (and its boundary with a 2-sphere topology), it is the bulk spin network state data that determines the 3d geometry and thus the notion of locality on its 2d boundary. If we focus (too much) on the algebraic data induced on the boundary - the spin states- and discard all the other bulk information, we lose the possibility to localise points on the boundary.

This issue is circumvented for 3d regions with a  single node. This corresponds to a single quantum of volume, defined by the intertwiner state carried by the node. The intertwiner is interpreted as a quantized convex polyhedron. Indeed, considering the  3-vectors induced by the spin states around the node,  there exists a unique convex polyhedron such that these are the normal vectors of the polyhedron's faces. This is ensured by Minkowski's theorem for convex polytopes: in this algorithm assuming the convexity of the surface, the normal vectors play the double role of determining the direction of the boundary face and determining each face position with respect to the other faces.
A spin network is then interpreted as a discrete 3d geometry resulting from gluing those quantized convex polyhedra together \cite{Freidel:2010aq,Bianchi:2010gc}.
Such a convex polyhedron interpretation is fine and well-suited for an elementary quantum of volume, but does not really make sense for a general surface bounding a possibly large region of the quantum space. Indeed, there is absolutely no reason to assume that the boundary surface is convex (with respect to the bulk 3d geometry).
This is the deep reason -to properly represent the observables for 2d surfaces- why there recently has been several proposals of extending the algebraic data carried by spin networks and augment them with  2d metric data \cite{Freidel:2015gpa,Freidel:2016bxd,Freidel:2018pvm,Freidel:2019ees,Freidel:2019ofr}.

From this perspective, it seems very appealing to be able to naturally incorporate localization data (2d coordinates) for the boundary punctures with the spinors (area excitations) into $\SL(2,\C)$ holonomies. More precisely, we show below that there is an enticing one-to-one correspondence between punctures provided with spinor and position and flat discrete Lorentz connections in the boundary.

\medskip

Assuming that 2d space boundary has a  spherical topology, we view the 2-sphere  as a complex manifold and parametrize it in terms of a complex variable $\zeta$ locating points on the boundary. We provide every punctures on the boundary with position coordinates, given by an extra complex variable $\zeta_{i}$ associated to each puncture. Then we  show below that we can define a unique Lorentz holonomy between two punctures, which transports the extended data defined y the spinor-position pair $(z_{i},\zeta_{i})$ from one puncture onto another.

Drawing inspiration from previous works on $\SU(2)$ twisted geometries and $\SL(2,\C)$ spin networks \cite{Freidel:2010bw,Livine:2011vk,Speziale:2012nu}, we embed the complex position $\zeta_{i}$  as a spinor $w_{i}\in\C^{2}$ up to complex rescaling:
\be
w=\mat{c}{w^{0}\\ w^{1}}\sim\lambda w \,,
\,\,\forall\lambda\in\C
\qquad\longrightarrow\quad
\textrm{equivalence class}
\quad
[w]=[\lambda w] 
\quad\textrm{defined by ratio}\,\,
\zeta=\f{w^{0}}{w^{1}}
\,.
\ee
Then we define the unique $\SL(2,\C)$ group element between two punctures which maps $(z_{i},w_{i})$ onto $(z_{j},w_{j})$:
\be
G_{ij}=
\f{|z_{j}\ra[w_{i}|-|w_{j}\ra[z_{i}|}{[w_{i}|z_{i}\ra}
\,\in\SL(2,C)
\,,\qquad
G_{ij}\,|z_{i}\ra=|z_{j}\ra
\,,\quad
G_{ij}\,|w_{i}\ra=|w_{j}\ra\,.
\ee
This Lorentz holonomy transports as wanted the spinor and position from one puncture to another, $G_{ij}\triangleright (z_{i},\zeta_{i})=(z_{j},\zeta_{j})$.
Its definition is possible and valid if and only if the (holomorphic) scalar product between spinors remains constant $[w_{i}|z_{i}\ra=[w_{j}|z_{j}\ra$.
This gives a unique prescription for the $\SL(2,\C)$ holonomy in terms of the spinors $(z_{i},z_{j})$ and positions $(\zeta_{i},\zeta_{j})$. Indeed this necessary condition fixes the spinors $w$'s in terms of the complex coordinates $\zeta$'s, up to an arbitrary overall scale fixed for the whole boundary network.
%
%
Let's indeed choose a global value  $[w_{i}|z_{i}\ra=\sigma\in\C$ for all the punctures $i=1..N$.
Then if we are given the spinors $z_{i}\in\C^{2}$ and the complex positions $\zeta_{i}$, we can reconstruct the spinors $w_{i}$:
\be
w_{i}=\mat{c}{\lambda_{i}\zeta_{i}\\ \lambda_{i}}
\,,\qquad
\sigma=[w_{i}|z_{i}\ra
=\lambda_{i}\,\big{(}
\zeta_{i}z^{1}_{i}-z^{0}_{i}
\big{)}
\quad\Rightarrow\quad
\lambda_{i}=\f{\sigma}{\zeta_{i}z^{1}_{i}-z^{0}_{i}}
\,\in\C
\,.
\ee
This leads to unique $\SL(2,\C)$ group elements, which are actually independent of the specific value chosen for $\sigma$:
\be
G_{ij}
=
\f1\sigma\mat{cc}
{\lambda_{j}z^{1}_{i}\zeta_{j}-\lambda_{i}z^{0}_{j} & \lambda_{i}\zeta_{i}z^{0}_{j}-\lambda_{j}z^{0}_{i}\zeta_{j} \\
\lambda_{j}z^{1}_{i}-\lambda_{i}z^{1}_{j} & \lambda_{i} \zeta_{i}z^{1}_{j} -\lambda_{j}z^{0}_{i}}
\,.
\ee
We can recast this $\SL(2,\C)$ holonomy in the factorized form involving the $\cT$-holonomy $t_{ij}$ and the section group elements $\Lambda_{z_{i}}$,
\be
G_{ij}=\Lambda_{z_{j}}\mat{cc}{1 & \mu_{ij}\\ 0 & 1}\Lambda_{z_{i}}^{-1}
=
\f{|z_{j}\ra \la z_{i}|}{\la z_{i}|z_{i}\ra}+\f{|z_{j}][ z_{i}|}{\la z_{j}|z_{j}\ra}+\mu_{ij}|z_{j}\ra [ z_{i}|
\,.
\ee
Matching this with the formulas above gives the expression of the $\cT$-holonomy coefficient $\mu_{ij}$ between punctures in terms of the complex coordinates $\zeta_{i}$ of the punctures: 
\be
\mu_{ij}=
\mu_{j}-\mu_{i}
\,,\qquad\textrm{with}\quad
\mu_{i}
=
\f1{\la z_{i}|z_{i}\ra}
\left(\f{\zeta_{i}\bz^{0}_{i}+\bz^{1}_{i}}{z^{0}_{i}-\zeta_{i}z^{1}_{i}}\right)
\,.
\ee
%
Since the group elements $G_{ij}$ are uniquely determined by each pair of source and target spinor-position $(z_{i},\zeta_{i},z_{j},\zeta_{j})$,  this construction defines a flat discrete $\SL(2,\C)$ connection on the boundary surface, $G_{i_{n}i_{1}}..G_{i_{1}i_{2}}=\id$ for every cycle of punctures on the boundary $i_{1}\rightarrow i_{2}\rightarrow i_{3}\rightarrow\dots\rightarrow i_{n}\rightarrow i_{1}$. This leads to the following proposition:

\begin{prop}

Considering the boundary data defined by a collection of $N$ punctures each dressed with a spinor $z_{i}\in\C^{2}$ and a complex coordinate $\zeta_{i}$ with $i$ labelling the punctures and running  from 1 to $N$, this defines a unique discrete Lorentz connection with $\SL(2,\C)$ group elements $G_{ij}$   transporting the boundary data from the puncture $i$ to the puncture $j$:
\be
G_{ij}z_{i}=z_{j}\,,\qquad
G_{ij}\mat{c}{\zeta_{i}\\1}\propto\mat{c}{\zeta_{j}\\1}\,,
\ee
where $G_{ij}$ acts by multiplication by the corresponding 2$\times$2 matrix and the proportional relation is up to an arbitrary (non-vanishing) complex number.%
This discrete connection is necessarily flat, i.e. the $\SL(2,\C)$ holonomy around any cycle $\cC$ on the boundary is trivial:
\be
\overleftarrow{\prod_{\ell\in\cC}} G_{\ell}=\id\,.
\ee
Reciprocally, a discrete flat Lorentz connection defines boundary data $(z_{i},\zeta_{i})_{i=1..N}$ up to a global $\SL(2,\C)$ transformation.

\end{prop}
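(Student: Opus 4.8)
The plan is to establish the two directions separately, the forward direction being essentially the elementary fact that two linearly independent spinors form a basis of $\C^2$ and that an $\SL(2,\C)$ element is fixed by its action on such a basis.

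For the forward direction I would first recall the embedding of the positions $\zeta_i$ into the auxiliary spinors $w_i$ introduced just above the statement, normalized so that the holomorphic pairing takes a common value $[w_i|z_i\ra=\sigma\in\C$, $\sigma\neq0$, for every puncture; this is precisely the normalization $\lambda_i=\sigma/(\zeta_i z_i^1-z_i^0)$ written before the proposition. Since $[w_i|z_i\ra=\sigma\neq0$, the spinors $|z_i\ra$ and $|w_i\ra$ are linearly independent and hence a basis of $\C^2$. I would then take $G_{ij}$ to be the unique linear map carrying $(|z_i\ra,|w_i\ra)$ to $(|z_j\ra,|w_j\ra)$ and verify that it is given by the closed formula $G_{ij}=\sigma^{-1}\big(|z_j\ra[w_i|-|w_j\ra[z_i|\big)$, by evaluating it on the basis: using $[z_i|z_i\ra=0$, $[w_i|z_i\ra=\sigma$ and $[z_i|w_i\ra=-\sigma$ one gets $G_{ij}|z_i\ra=|z_j\ra$ and $G_{ij}|w_i\ra=|w_j\ra$. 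Membership in $\SL(2,\C)$ follows from $\det G_{ij}=[z_j|w_j\ra/[z_i|w_i\ra$, which equals $1$ exactly because $[w_i|z_i\ra$ has been fixed to the common value $\sigma$; this is the structural role of the global normalization, and it is the crucial place where the choice of a single $\sigma$ for the whole network enters. Uniqueness of $G_{ij}$ is then immediate (any competitor agrees with it on a basis), as is the orientation convention $G_{ji}=G_{ij}^{-1}$ and the projective transport $G_{ij}(\zeta_i,1)^{T}\propto(\zeta_j,1)^{T}$, since $w_i\propto(\zeta_i,1)^{T}$.

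Flatness then requires no further computation. Along a cycle $\cC: i_1\to i_2\to\cdots\to i_n\to i_1$, the ordered product $\overleftarrow{\prod_{\ell\in\cC}}G_\ell$ carries the basis $(|z_{i_1}\ra,|w_{i_1}\ra)$ successively to $(|z_{i_2}\ra,|w_{i_2}\ra)$, then to $(|z_{i_3}\ra,|w_{i_3}\ra)$, and so on back to $(|z_{i_1}\ra,|w_{i_1}\ra)$. A linear map fixing two linearly independent vectors is the identity, so the holonomy equals $\id$ and the discrete connection is flat.

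For the reciprocal direction I would use flatness to exhibit the connection as pure gauge. Fixing a root puncture $i_0$, I define frames $h_i\in\SL(2,\C)$ by parallel transport $h_i=G_{i_0\to i}\,h_{i_0}$, where $G_{i_0\to i}$ is the ordered product of the $G_\ell$ along any path from $i_0$ to $i$; triviality of the holonomy around every cycle of the complete graph on the punctures guarantees path-independence, hence $h_i$ is well defined and $G_{ij}=h_j h_i^{-1}$. Choosing a fixed reference basis $(z_*,w_*)$ with $[w_*|z_*\ra\neq0$, I set $z_i=h_i\,z_*$ and recover $\zeta_i$ as the ratio of the components of $w_i=h_i\,w_*$; by the forward construction these data reproduce exactly the given $G_{ij}$. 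The residual freedom is a single common right multiplication $h_i\mapsto h_i\,c$ with $c\in\SL(2,\C)$, which preserves all $G_{ij}=h_j h_i^{-1}$ and amounts to replacing the reference spinors $(z_*,w_*)$ by $(cz_*,cw_*)$; dually, acting on the external data by a global $k$ via $z_i\mapsto k z_i$, $\zeta_i\mapsto k\act\zeta_i$ conjugates the holonomies $G_{ij}\mapsto k G_{ij}k^{-1}$. This single global $\SL(2,\C)$ element is precisely the ``up to a global $\SL(2,\C)$ transformation'' of the statement. The main obstacle, and the only point genuinely requiring care, is this last bookkeeping of the residual freedom: one must check that flatness in the stated trace/holonomy form really makes the transport path-independent, and then correctly identify that the freedom preserving a fixed connection is a global change of internal reference frame, which becomes the stated global $\SL(2,\C)$ action on $(z_i,\zeta_i)$ only after passing to its effect on the external data. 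The individual determinant and intertwining identities are routine by comparison.
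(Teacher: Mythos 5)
Your proof is correct and follows essentially the same route as the paper: the forward direction reproduces the paper's construction of $G_{ij}=\sigma^{-1}\big(|z_j\ra[w_i|-|w_j\ra[z_i|\big)$ with the common normalization $[w_i|z_i\ra=\sigma$ guaranteeing $\det G_{ij}=1$ and flatness, and the reverse direction is the paper's transport-from-a-root-puncture argument, with the global $\SL(2,\C)$ freedom identified as the choice of data at the root. Your phrasing of flatness (a holonomy fixing the basis $(|z_{i_1}\ra,|w_{i_1}\ra)$ must be the identity) and the explicit determinant bookkeeping are slightly more detailed than the paper's, but the substance is identical.
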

\begin{proof}
 
We have already constructed the $\SL(2,\C)$ holonomies from the spinor-position data $(z_{i},\zeta_{i})_{i=1..N}$. We simply have to describe the reverse procedure. Considering a discrete flat Lorentz connection $G_{ij}\in\SL(2,;\C)$, we start from one arbitrarily chosen puncture, $i_{0}$ and we choose arbitrary data $(z_{i_{0}},\zeta_{i_{0}})\in\C^{2}\times\C$. Then we define all the remaining spinor-position variables by transport from $i_{0}$ as $(z_{i},\zeta_{i})=G_{i_{0}i}\triangleright(z_{i_{0}},\zeta_{i_{0}})$. The flatness condition for the connection, $G_{ij}=G_{i_{0}j}G_{i_{0}i}^{-1}$ around a triangle, ensures that $(z_{j},\zeta_{j})=G_{ij}\triangleright(z_{i},\zeta_{i})$ for all pairs of  punctures. The global $\SL(2,\C)$ freedom corresponds to the choice of the initial data  $(z_{i_{0}},\zeta_{i_{0}})$.

\end{proof}
\begin{figure}[!htb]
%
%
%
%
%
%
%
\includegraphics[height=55mm]{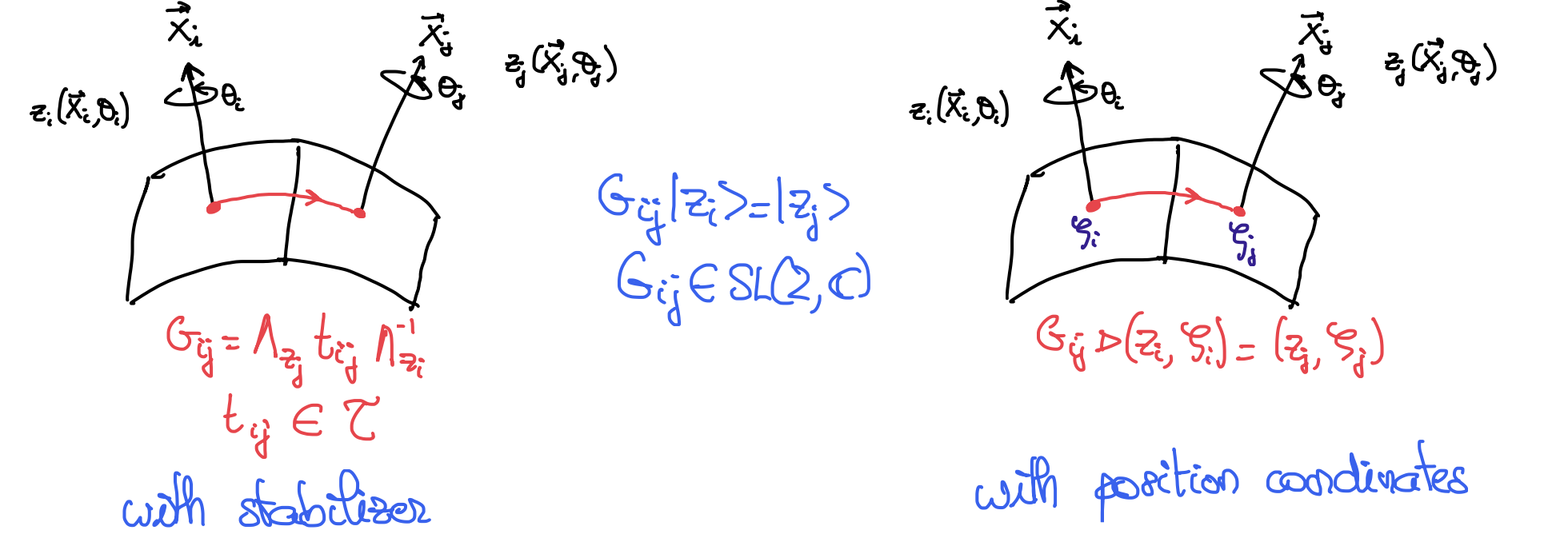}
	\caption{The spinor $z$, living at each puncture, encodes the (semi-classical) information about the quantum of area carried by the puncture. It contains both the flux vector $\vX$ giving the normal vector to the surface patch and the twist angle $\theta$ representing the measure of extrinsic curvature to the spatial slice.  We define $\SL(2,\C)$ boundary holonomies $G_{ij}$ that transport the spinor information from puncture to puncture, such that $G_{ij}|z_{i}\ra=|z_{j}\ra$. Either we consider the spinors as the whole boundary data and the $\SL(2,\C)$ holonomies are determined by the spinors up to stabilizer group elements given by triangular matrices $t_{ij}\in\cT$ (or the left hand side). Or we supplement the spinors $z_{i}$ with complex variables $\zeta_{i}$ indicating the position of the puncture on the 2d boundary and then the $\SL(2,\C)$ holonomies are entirely determined by the spinors  and complex coordinates (on the right hand side). In the first case, the holonomies of the discrete $\SL(2,\C)$ connection  live in the stabilizer subgroup $\cT$. While, in the latter case, the resulting  discrete $\SL(2,\C)$ connection is exactly flat.}
	\label{fig:boundaryposition}
\end{figure}

The geometric picture is that the complex variables $\zeta_{i}$, parametrizing the spinor stabilizer and interpreted as 2d  coordinates, play the role of discrete 2d metric data on the boundary. From this point of view, they describe the intrinsic geometry of the boundary. On the other hand, the spinors $z_{i}$ define the normal 3d vectors to the surface and thus describe the extrinsic geometry of the boundary (within the 3d slice). Together, they describe the whole discrete embedded geometry of the boundary surface within the spatial slice, as illustrated on fig.\ref{fig:boundaryposition}.

If we compare the spinor dynamics ansatz introduced in the previous section \ref{subsec:dynamics}, the complex coordinates $\zeta_{i}$ are meant to play a similar role than the coupling constants of the spinor Hamiltonian, which were shown to give the boundary metric. In this formulation in terms of $\SL(2,\C)$ connections, the intrinsic geometry, encoded by the $\zeta_{i}$'s, are put on the same footing than the extrinsic geometry, encoded by the $z_{i}$'s. Both will naturally acquire dynamics. There is obvious reason to keep the $\zeta_{i}$'s or $z_{i}$'s fixed while the other evolve and fluctuate, except if we make a special choice of boundary conditions. For instance, we could keep the intrinsic 2d metric fixed on the boundary surface and let the embedding data evolve, which would correspond to fixing the complex coordinates $\zeta_{i}$'s and the spinor norms $\la z_{i}|z_{i}\ra$.

\bigskip

Since we have identified field configurations corresponding to flat Lorentz connections and understood them in terms of boundary geometry, it is natural to think about curvature defects and seek a geometrical interpretation for $\SL(2,\C)$ connections with non-trivial curvature. 
For instance, in the more general setting where we focus solely on the spinors, as above in section \ref{subsec:spinorholo}, we do not require the $\SL(2,\C)$ connection to be flat, but allow for a non-trivial stabilizer. This means that, if we start at a puncture dressed with spinor $z$ and complex position $\zeta$ and go around a cycle on the boundary back to that initial puncture, the non-trivial holonomy $G\in\cG_{z}=\Lambda_{z}\cT\Lambda_{z}^{-1}$ will shift the 2d coordinate from $\zeta$ to a new position $\tilde{\zeta}$. Or in short, if we go around a loop, the position of the puncture has changed.
This is reminiscent of the framework of relative locality \cite{AmelinoCamelia:2011bm,AmelinoCamelia:2011pe,Freidel:2013rra}, where non-trivial connection and torsion in phase space lead to a relativity of the position of an event with respect to the observer and its history.
That would provide a geometrical interpretation for translational curvature defects of the Lorentz connection. It is tempting to try to interpret $\SU(2)$ curvature defects as magnetic excitations of the Ashtekar-Barbero connection along the directions tangent to the boundary (and not transversally) and dilatation curvature defects as some conformal excitations, but we leave this for future analysis.


\subsection{$\SL(2,\C)$  boundary theory}

Now that we have reformulated the flux excitations (or area quanta) of loop quantum gravity on space-time corners in terms of  (discrete) flat Lorentz connections, it is natural to think about its continuum limit as a theory of a Lorentz connection field on the 2+1-d time-like boundary of space-time.
If we want to discuss the specifics of the boundary theory and dynamics, we need to first address two questions:
\begin{itemize} 
\item As it is natural with a  Lorentz connection, should we impose gauge invariance under local Lorentz transformations?
\item As boundary flux excitations correspond to flat Lorentz connections, should we focus on solely on flat connections or more broadly consider theories of arbitrary Lorentz connections whose equations of motion in vaccuum (i.e. without defects or sources) nevertheless impose flatness?
\end{itemize}

Starting with the issue of enforcing or not local gauge invariance, one needs to keep in mind that there is a single flat Lorentz connection on the 2-sphere up to gauge transformations. So, if one focuses on purely flux excitations (thus only area quanta and not dual defects on the boundary) and thus considers only flat connections, imposing local $\SL(2,\C)$ gauge invariance on the boundary renders the boundary theory trivial and empty. Indeed local $\SL(2,\C)$ transformations at a puncture means that one can arbitrarily shift the spinor $|z\ra\in\C^{2}$ living at the puncture. Since the spinor indicates both the size of the area quantum and its (normal) direction, enforcing $\SL(2,\C)$ gauge invariance would mean that all quanta of area are physically equivalent. This would amount to requiring the invariance of the boundary theory under both 2d diffeomorphisms and conformal transformations. Although this seems perfectly acceptable, reasonable and even preferable, that goes against the traditional view of boundaries in loop quantum gravity where the spins carried by each boundary puncture is thought of as a physical observable.

\medskip

So let us start with the case where we do not require a local gauge invariance under boundary $\SL(2,\C)$ transformations. We will consider the case of gauge-invariant dynamics later in this section.
So the boundary variables are the $\SL(2,\C)$ holonomies $G_{ij}$ between punctures. In fact, these determine the spinors (and positions) only up to a global $\SL(2,\C)$ transformation, which correspond to the choice of spinor (and position) at a chosen root puncture $i_{0}$. So we would like to define a theory, its action principle and Hamiltonian, in terms of  group elements $G_{ij}\in\SL(2,\C)$ between punctures and one spinor $z_{i_{0}}\in\C$. From this spinor at the root puncture, one immediately reconstructs all the spinors by transporting it to the other punctures, assuming that $z_{i}=G_{i_{0}i}z_{i_{0}}$. Then we can immediately reformulate the spinor action \eqref{actionN} and Hamiltonian \eqref{boundaryH} defined in the previous section by expressing the scalar product between spinors in terms oif the $\SL(2,\C)$ holonomies:
\be
\la z_{i} |z_{j}\ra=\la z_{i_{0}}| G_{i_{0}i}^{\dagger}G_{i_{0}j}|z_{i_{0}}\ra\,,\qquad
[ z_{i} |z_{j}\ra=[ z_{i_{0}}| G_{i_{0}i}^{-1}G_{i_{0}j}|z_{i_{0}}\ra=[ z_{i_{0}}| G_{ij}|z_{i_{0}}\ra
\,,
\ee
where we have used the flatness condition of the discrete Lorentz connection.
This leads to an action written in its canonical form which reads:
\be
S[\{G_{kl}\}, z_{i_{0}}]
=
\int\rd t\, \bigg{[}
-i\la z_{i_{0}}| \sum_{k}G_{i_{0}k}^{\dagger}G_{i_{0}k}|\rd_{t}z_{i_{0}}\ra
-i\sum_{k}\la z_{i_{0}}| G_{i_{0}k}^{\dagger}\rd_{t}G_{i_{0}k}|z_{i_{0}}\ra
-\cN\cH
\bigg{]}\,,
\ee
\beq
\textrm{with}\quad
\cH&=&\sum_{n}\beta_{n}\sum_{k}\la z_{i_{0}}| G_{i_{0}k}^{\dagger}G_{i_{0}k}|z_{i_{0}}\ra^{2n}
+\gamma_{2}\sum_{k,l}C_{kl}\la z_{i_{0}}| G_{i_{0}k}^{\dagger}G_{i_{0}l}|z_{i_{0}}\ra\nn\\
&&+\tilde{\gamma}_{2}\sum_{k,l}D_{kl}\la z_{i_{0}}| G_{kl}|z_{i_{0}}\ra
+\bar{D}_{kl}\la z_{i_{0}}| G_{kl}|z_{i_{0}}\ra
+\dots
\nn
\eeq
This is the exact equivalent of the boundary spinor action ansatz \eqref{boundaryH}  proposed earlier.
Here we have implicitly assumed the flatness of the Lorentz connection, through the triangular relation between  $\SL(2,\C)$ group elements $G_{kl}=G_{i_{0}k}^{-1}G_{i_{0}l}$. The flatness condition allows to transport all the spinors back to the root puncture. We could put both the flatness condition and the use of a root spinor aside and generalize the Hamiltonian above to define an action principle in terms of solely the discrete $\SL(2,\C)$ connection. Indeed, forgetting about $z_{i_{0}}$, the natural proposal is to replace the projection on $|z_{i_{0}}\ra\la z_{i_{0}}|$ by the trace of the group elements, thus writing the Hamiltonian as a linear combination of terms $\tr\, G_{kl}^{\dagger}G_{kl}$ and $\tr G_{kl}$ and their powers:
\be
S_{\SL(2,\C)}[\{G_{kl}\}]
=
\int\rd t\, \bigg{[}
-i\sum_{k,l}\tr \,G_{kl}^{\dagger}\rd_{t}G_{kl}
-\cN\cH_{\SL(2,\C)}
\bigg{]}\,,
\ee
\be
\textrm{with}\quad
\cH_{\SL(2,\C)}=\sum_{n}\beta_{n}\sum_{k,l}(\tr \,G_{kl}^{\dagger}G_{kl})^{2n}
+\gamma_{2}\sum_{k,l}C_{kl} \tr \,G_{kl}^{\dagger}G_{kl}\nn\\
+\tilde{\gamma}_{2}\sum_{k,l}D_{kl}\tr\, G_{kl}
+\dots
\nn
\ee
The next step would be to analyze the physics predicted by such dynamics.
This would require 1. understand the solution to the equations of motion; 2. check the stability or not of flat connections; 3. work out the continuum limit of this discrete ansatz as a field theory.

\medskip

This conclude the proposal for a boundary theory non gauge-invariant under local $\SL(2,\C)$ transformations. We now turn to the possibility of defining gauge invariant boundary dynamics. Let us first point out that the bulk (loop) quantum gravity is invariant under two kinds of gauge transformations: $\SU(2)$ gauge transformations and (space-time) diffeomorphisms. Since $\SL(2,\C)$ is (much) larger than $\SU(2)$, requiring the $\SL(2,\C)$ gauge invariance of the boundary is a non-trivial extension of the $\SU(2)$ gauge transformations generated by the Gauss law of the Ashtekar-Barbero connection: it could involve boundary diffeomorphisms or other types of transformations (e.g. conformal transformations,\dots). It should thus correspond to a very specific class of boundary conditions with enhanced symmetry.

As we underlined earlier, since there is a unique flat $\SL(2,\C)$ connection on the boundary 2-sphere up to $\SL(2,\C)$ gauge transformations, enforcing the gauge invariance of the boundary theory under local $\SL(2,\C)$ transformations leads to a trivial theory with a single physical state. This amounts to considering all possible area quanta on the spatial boundary as physically equivalent, whatever their direction and size. This would become interesting only if we venture away from the flatness condition of the $\SL(2,\C)$ connection. This follows a perfect natural logic: 
trading the spinors representing boundary flux excitations for  Lorentz connections, thus identifying the flux excitations as  flat Lorentz connections, naturally leads to contemplating the meaning of non-flat connections, which  should represent different types of boundary excitations.
In the continuum limit, we are therefore looking for a gauge field theory of a connection, whose equations of motion in vaccuum (i.e. without source or defect) amount to the flatness of the connection. Natural candidates are coset $\SL(2,\C)$ Chern-Simons theories (see e.g. \cite{Isidro:1991fp}) or a $\SL(2,\C)$ Yang-Mills theory on the 2+1-d boundary. What needs to be understood is 1. the (quantum) boundary conditions that they generate on the 2+1-d time-like boundary of space-time; 2. if those boundary conditions are compatible with the bulk dynamics of (loop) quantum gravity (i.e. generated by the bulk Hamiltonian constraints for some choice of space-time foliation). 

One should nevertheless keep in mind that we are not yet searching for a unique boundary theory, but more for classes of boundary theories. They should correspond to classes of boundary conditions for the bulk fields. They will likely admit for a non-trivial renormalization group flow corresponding to the changes of (quantum) boundary conditions under dilatations and deformations of the boundary.

\subsection{Recovering local $\SU(2)$ gauge invariance on the boundary: magnetic excitations}
\label{sec:gauge}

We have discussed the possibility of defining boundary dynamics invariant or not under $\SL(2,\C)$ local gauge transformations. However, the natural set of gauge transformations in the loop quantum gravity consists in $\SU(2)$ transformations. These act as local 3d rotations on the flux. The question is then if it makes sense to impose a $\SU(2)$  gauge invariance on the boundary (and not a $\SL(2,\C)$  gauge invariance anymore).

The $\SU(2)$ action on the boundary spinors is the straightforward  multiplication by 2$\times$2 matrices:
\be
|z\ra\in\C^{2}\,\mapsto h\,|z\ra\in\C^{2}\,,\qquad h\in\SU(2).
\ee
The boundary spinor dynamics  \eqref{actionN}-\eqref{boundaryH} that we have studied up to now has already been assumed to be (gauge) invariant under global $\SU(2)$ transformations, leading to an expansion of the Hamiltonian in terms of scalar products between the spinors. Now we would like to upgrade this invariance to local $\SU(2)$ transformations. This is naturally achieved by introducing extra degrees of freedom on the boundary\footnotemark: a (discrete) $\SU(2)$ connection defining the transport of the flux excitations -the spinors- on the boundary. 
\footnotetext{
Another -na\"ive- way to impose invariance under local $\SU(2)$ transformations is to restrict to the $\SU(2)$-invariant part of each flux excitation. This means keeping only the norm of each spinor $\la z_{k}|z_{k}\ra$, or equivalently the spin $j_{k}$ of the corresponding quanta of area, discarding any information about the direction and phase of the spinor $z_{k}$, or equivalently about the magnetic moment and phase of the spin state living in $\cV_{j_{k}}$. A boundary state would simply be labelled by the spins, without further data (i.e. no spin state or intertwiner). The dynamics would then couple those spins $j_{k}$ together and let them evolve. From the point of view of the 2d geometry of the boundary surface, this corresponds to keeping only the density factor (determinant of the induced 2d metric) as boundary data and discarding the rest of the 2d metric and all the extrinsic data about e embedding of the 2d boundary surface in the 3d space.  Although this could make sense mathematically, we do not see any physical reason for such drastic reduction of the boundary modes.
}

We thus consider $\SU(2)$ group elements $g_{kl}$ associated to (oriented) pairs of punctures $(k,l)$. While local $\SU(2)$ transformations acts locally on the spinors at each puncture, they act at both source and target of the $\SU(2)$ holonomies:
\be
z_{k}\mapsto h_{k}z_{k}\,,\qquad
g_{kl}\mapsto h_{l}g_{kl}h_{k}^{-1}
\,.
\ee
Thus the quadratic $\SU(2)$-invariant combination of spinors are the transported scalar products, $\la z_{l}|g_{kl}|z_{k}\ra$. We can define the boundary theory as before in section \ref{subsec:dynamics}, with a polynomial Hamiltonian in $\la z_{l}|g_{kl}|z_{k}\ra$ and $[ z_{l}|g_{kl}|z_{k}\ra$. The action principle would now depend on both the spinors $z_{k}$ and the $\SU(2)$ holonomies $g_{kl}$ as independent variables. In the continuum limit, these $\SU(2)$ holonomies would become a $\SU(2)$ connection field, defining a $\SU(2)$ covariant derivative on the time-like boundary of space-time.

It is very tempting to interpret this boundary $\SU(2)$ connection as the (pull-back of the) Ashtekar-Barbero connection on the space-time corner. This means that we do not consider only the Ashtekar-Barbero connection in the direction transversal to the boundary surface -carried by the spin network edges puncturing the surface- but also its components tangential to the boundary surface -thus to be carried by spin network links running along the surface\footnotemark. These are interpreted as {\it magnetic} excitations of the boundary supplementing the {\it electric} excitations defined by the flux living at the boundary punctures (see e.g. \cite{Freidel:2019ofr} for a discussion of the various edge modes of loop quantum gravity). This type of extended spin network structure, with links running transversally and tangentially to surfaces, was already proposed in e.g. \cite{Charles:2016xzi,Delcamp:2018efi,Freidel:2019ees}, and it would be interesting to investigate further the corresponding possible  boundary dynamics.
\footnotetext{
The special role fo tangential spin network links was already discussed in early loop quantum gravity work, especially for their non-trivial contribution to the area spectrum and to black hole entropy computations \cite{Frittelli:1996cj,Rovelli:1998gg}.
}

The last point we would to discuss is the relation between the newly introduced $\SU(2)$ connection $g_{kl}$ on the boundary and the $\SL(2,\C)$ connection $G_{kl}$ that we have discussed up to now. The essential difference is that the $\SU(2)$ connection is a field independent from the spinors while the $\SL(2,\C)$ connection depends on the spinors and is actually a reformulation of the spinors. Thus, the $\SL(2,\C)$ transport between punctures exactly maps the spinors onto each other, $G_{kl}|z_{k}\ra=|z_{l}\ra$, while the $\SU(2)$ transport between punctures simply allows to write down locally $\SU(2)$-invariant scalar products $\la z_{l}|g_{kl}|z_{k}\ra$.
Nevertheless, depending on the precise boundary action and Hamiltonian, the two connections could be related on-shell. Indeed, if as an example the Hamiltonian consisted in $-\sum_{k,l}\la z_{l}|g_{kl}|z_{k}\ra$, its minimal value\footnotemark{} for fixed spinors $z_{k}$ would be obtained if and only if the $\SU(2)$ group elements exactly transport the spinors, i.e. $|z_{l}\ra=g_{kl}|z_{k}\ra$. Then the $\SU(2)$ connection could be identified as the $\SU(2)$ part of the $\SL(2,\C)$ connection.
\footnotetext{
This Hamiltonian $\sum_{k,l}\la z_{l}|g_{kl}z_{k}\ra$ is actually real if we assume the natural orientation condition that $g_{lk}=g_{kl}^{-1}$. Indeed, this means that $\la z_{k}|g_{lk}z_{l}\ra=\la z_{k}|g_{kl}^{\dagger}|z_{l}\ra=\overline{\la z_{l}|g_{kl}|z_{k}\ra}$.
}
We postpone a more in-depth analysis of the possible coupled dynamics of boundary spinors and boundary $\SU(2)$ connection, i.e. of both electric and magnetic excitations on the boundary- to future investigation.

\section*{Outlook \& Conclusion}

We have investigated the basic structure of boundary theories for loop quantum gravity, meaning the dynamics of the degrees of freedom induced by the fluctuations and evolution of the bulk geometry on the 2+1-dimensional time-like boundary of space-time. This 2+1-dimensional boundary is seen as the time evolution of the 2d space boundary surface. As spin network states  span and create the quantum geometry of the 3d space, the spin network links puncture the boundary surface. The spin states carried by those boundary punctures define quanta of area of the surface. These are also referred to as flux excitations or ``electric'' excitations. They are the basic boundary data of loop quantum gravity. 

We showed that the dynamics of those quanta of area on the boundary surface can be mathematically formulated in terms of complex 2-vectors -spinors- attached to the  boundary punctures. Considering a Hamiltonian polynomial in those excitations, the lowest order Hamiltonian (with up to quartic terms) corresponds to a  Bose-Hubbard model (with two species). Such condensed matter models are known for their phase transitions, which could lead to interesting new physics and predictions for loop quantum gravity phenomenology. Furthermore, we showed that the coupling constants of the quadratic terms of the Hamiltonian can be understood in the continuum limit (as the number of punctures is sent to infinity, thereby creating a lattice structure on the boundary surface) as the components of a  background metric on the 2+1-d time-like boundary. We consider this result as the hint of a deeper correspondence between the coupling constants of the boundary Hamiltonian for area quanta and (quantum) states of the 2+1-d boundary metric (or 2+1-d dressed metric in semi-classical terms).

We further showed for to reformulate the boundary spinor data in terms of discrete flat $\SL(2,\C)$ connections. Curvature of this boundary $\SL(2,\C)$ connection should be understood as novel boundary excitations for loop quantum gravity, which we speculate to correspond to the ``translational'' and ``magnetic'' edge modes envisioned in \cite{Freidel:2019ofr}. This opens the door to formulating boundary theories for loop quantum gravity as $\SL(2,\C)$ gauge theories (such as e.g. coset Chern-Simones theories).

\medskip

In the present work, our starting point was the formalism of loop quantum gravity and its spin network states for the quantum geometry. We have tried to clarify what could/would be a boundary theory in this framework and one should now analyze the physics of those models and compare them with the (many) recent works on boundary theories and edge modes in classical general relativity. In order to tackle this next step, there is first the broader question of what is meant by  {\it a boundary theory} for (quantum) gravity. We see three levels of sophistication:
\begin{enumerate}
\setcounter{enumi}{-1}
\item {\it boundary terms  to the bulk action:}

One usually needs to add a boundary term to a field theory action principle in order to ensure the differentiability of the action with respect to field variations and cleanly derive the equations of motion for the bulk field while respecting chosen boundary conditions. For instance, the Gibbons-Hawking-York term in the metric formulation of general relativity, defined as the boundary integral of the extrinsic curvature, ensures the differentiability of the corrected Einstein-Hilbert action (space-time integral of the scalar curvature) when fixing the induced metric on the boundary. This typically further ensures the gauge invariance of the overall action - bulk action plus boundary term- under gauge transformations consistent with the boundary conditions. This can be done consistently in a covariant phase space approach.

\item {\it dynamical boundary conditions and edge modes:}

The covariant phase space goes further and allows to define dynamical boundary conditions. Indeed, in a canonical setting, we distinguish the canonical boundary - the 3d space for general relativity - and the time-like boundary - the 2+1-d boundary, which describes the evolution in time of the 2d boundary of the 3d space. We usually refer to the 2d spatial  boundary surface as the corner (of space-time). Instead of fixing boundary conditions on the whole 2+1-d time-like boundary, one seeks to identify boundary field degrees of freedom living on the corner - typically, in general relativity's metric formulation, the 2d metric on the corner plus extra fields (e.g. a radial expansion scalar and a shear vector)- and let them evolve in time. The time-like boundary conditions are thus generated as the time evolution of initial boundary conditions on the corner. One usually requires that this boundary evolution satisfies a suitably extended gauge-invariance of the original bulk theory.
The obvious question is whether the time evolution of the boundary data on the corner can be derived as a Hamiltonian dynamics defined on its own, i.e. solely in terms of the boundary data without any reference to the bulk fields\footnotemark{}.
For such classes of boundary conditions, this edge mode dynamics defines the boundary theory.
\footnotetext{
This seems to be the natural (simplest) setting for (quasi-local) holography (from a canonical/hamiltonian viewpoint). In the case that one can identify classes of boundary conditions such that the boundary decouples from the bulk, in the sense that the edge mode dynamics can be defined entirely in terms of boundary fields, there is a holographic duality when the resulting boundary theory carries a representation of the conserved charges of the bulk dynamics: then the boundary and bulk theories have the same symmetry algebra and one can seek a correspondence between the observables of the two theories. A weaker form is when the boundary and bulk theories are Morita-equivalent, i.e. do not necessarily have the same symmetry algebra but their algebra of observables have isomorphic categories of representations.
}

\item {\it effective boundary theory:}

A more intricate setting is when the boundary dynamics does not decouple from the bulk theory and thus can not be defined on its own. One can nevertheless hope to integrate over the bulk fields and dynamics and extract an effective theory for the edge modes, i.e. an effective boundary theory. As a coarse-graining of the bulk theory\footnotemark{},
\footnotetext{Thus, there is no a priori reason for this effective boundary theory to be unitary. One should instead account for possible dissipation in the bulk or in other words, flux across the boundary.
}
this procedure can be understood as studying the renormalisation group flow for the 2+1-d boundary theory\footnotemark{}. This totally reverse the original logic of the field theory: instead of fixing boundary conditions and analysing the resulting bulk field and (quantum) fluctuations, one integrate over the bulk degrees of freedom and focus on the induced dynamics of the boundary conditions.
\footnotetext{
Holography seems to be achieved at a fixed point of this renormalisation flow, when the effective theory boundary is given by the original boundary theory -the part of the dynamics proper to the boundary without the interaction terms with the bulk- up to possible shifts of the coupling constants (i.e. without any new interaction terms).
}

\end{enumerate}

The present work, discussing the possible dynamics of quanta of area on the 2+1-d time-like boundary of space-time, is naturally set at the level 1 of this hierarchy. The next step in this direction would thus be to compare the boundary theory ansatz which we introduced with edge mode hamiltonians derived from general relativity in the case that the boundary conditions admit a proper hamiltonian formulation (i.e. if the edge mode dynamics can be defined without referring to the bulk degrees of freedom).
From this perspective, the natural questions to face are:

\begin{itemize}
\item how do the presently introduced templates for the dynamics of boundary area quanta compare the edge mode phase space and dynamical boundary conditions in general relativity?
\item do those models of boundary dynamics carry a representation of the boundary charge algebra (i.e. of the symmetry algebra) of general relativity or of a suitably deformed version to be implemented i quantum gravity?
\item do they define renormalizable quantum field theories in the continuum limit? does this renormalization flow correctly represent deformation and rescaling of the boundary surface? What are the  fixed points (as conformal field theories) of this flow?
\end{itemize}
Beside connecting the present work to the study of edge modes in gravity theories, we would like to conclude on the fact that it also opens new perspectives. Indeed the boundary dynamics models we introduced can  naturally be  written as condensed matter models with non-trivial phase diagrams, and thus with possible interesting phase transitions, which would enrich the phenomenology of (loop) quantum gravity.

\section*{Acknowledgement}

I am very grateful for the regular discussions on edge modes in quantum field theory and holography in quantum gravity with Laurent Freidel, Marc Geiller and Christophe Goeller.

\appendix
%

\section{$\SL(2,\C)$-holonomy between 3-vectors}
\label{app:vector-sl2C}


A variation of our new framework for LQG boundaries in terms of $\SL(2,\C)$ holonomies is to define a more equal splitting of the $\SL(2,\C)$ data, as a pair of 3-vectors instead of a 4-dimensional spinor and a 2-dimensional complex number. This can be thought as natural. In the context of twisted geometries, a spinor $z\in\C^2$ encodes a 3-vector $\vec{X}\in\R^3$ plus a phase $\theta$. The 3-vector is interpreted as the normal vector to the boundary surface, while the twist angle $\theta$ is  canonically conjugate to the 3-vector norm and usually interpreted as encoding the extrinsic curvature across the boundary surface. A splitting in terms of pairs of 3-vectors would amount to bundling the twist angle together with the complex position.

More precisely, to each puncture is associated a 3-vector $\vX_{i}$, which can be defined mathematically as a spinor $z_{i}$ up to a phase:
\be
\vX=\la z|\vsigma|z\ra
\,,\quad
|\vX|=\la z|z\ra
\,,\quad
|z\ra\la z|=\f12\,\left(|\vX|\,\id\,+\,\vX\cdot\vsigma\right)
\,,\quad
X=\vX\cdot\vsigma=2|z\ra\la z|-\la z|z\ra\id
\,,
\ee
where the $\sigma$'s are the Pauli matrices.
Considering two 3-vectors $\vX_{i}$ and $\vX_{j}$, or equivalently the corresponding two traceless Hermitian matrices $X_{i}$ and $X_{j}$, we can map one onto the other with a Lorentz transformation $G_{ij}\in\SL(2,\C)$:
\be
G_{ij}X_{i}G_{ij}^\dagger=X_{j}\,.
\ee
Let us start with the unit 3-vector along the $z$-axis and see how to boost it to an arbitrary 3-vector $X$, i.e. what $\SL(2,\C)$ realizes $G\sigma_{z}G^\dagger=X$. We consider a spinor $z\in\C^2$ such that $\vX=\la z|\vsigma|z\ra$ and identify a group element $G$ mapping the reference spinor to $z$:
\be
G\,|\uparrow\ra=z
\quad\Rightarrow\quad
G\sigma_{z}G^\dagger=X\,.
\ee
The stabilizer group of $\sigma_{z}$ under the action by conjugation obviously contains all the transformations generated by $\sigma_{z}$, that is all the complex dilatations $\Delta_{\lambda}$ with $\lambda\in\C$, but is actually larger. It is the $\SU(1,1)$ subgroup of $\SL(2,\C)$:
\be
G\sigma_{z}G^\dagger=\sigma_{z}
\quad\Longleftrightarrow\quad
G=\mat{cc}{a & b \\ \bar{b} & \bar{a}}
\qquad\textrm{with}\quad
|a|^2-|b|^2=1\,.
\ee
So the stabilizer group for the reference 3-vector $X=\sigma_{z}$ is $S^0=\SU(1,1)$ and we can rotate it to get the stabilizer group for an arbitrary vector $X\in\R^3$ obtained as $S_{X}=g_{\hat{X}}S^0g_{\hat{X}}^{-1}$, where $g_{\hat{X}}$ is a $\SU(2)$ group element rotating the $z$-axis unit vector to the direction of the 3-vector $\hat{X}=\vX/|\vX|$.

Similarly to what we have done in the spinorial case, we can now trade the set of 3-vectors $\vX_{i}$ living at the punctures on the boundary for a network of $\SL(2,\C)$ holonomies $G_{ij}$ between punctures on the boundary. Requiring that the holonomies map the 3-vectors from one puncture to another, $G_{ij}\act X_{i}=G_{ij}X_{i}G_{ij}^\dagger=X_{j}$ means that all $\SL(2,\C)$-holonomies $G$ around loops on the boundary have to live in a $\SU(1,1)$ stabilizer subgroup. Algebraically, this translates to the condition $\tr\,G\in\R$, $\tr\,G >2$.


\bibliographystyle{bib-style}
\bibliography{LQG}

\end{document}